\newcommand{\name}[1]{\textsc{#1}}
\newcommand{\arcfunc}{\phi}
\newcommand{\weight}{\rho}
\newenvironment{namedefn}[3]{
\par\addvspace{0.4\baselineskip}\fbox{%
\begin{minipage}[t]{0.9\linewidth}%
\begin{tabular}{p{18mm}p{115mm}}
    \multicolumn{2}{l}{{\name{#1}}} \\
        \textsl{Input:} & {#2} \\ 
        \textsl{Task:} & {#3} \\
   \end{tabular}
\end{minipage}}\par\addvspace{0.4\baselineskip}
}
\title{Parameterized Directed $k$-Chinese Postman Problem and $k$ Arc-Disjoint Cycles Problem on Euler  Digraphs}
\author{Gregory Gutin\inst{1} \and Mark Jones\inst{1} \and Bin Sheng\inst{1} \and Magnus Wahlstr{\"o}m\inst{1}
}
\institute{Royal Holloway, University of London\\
Egham, Surrey TW20 0EX, UK
} 
\begin{document}

\maketitle

\begin{abstract}
\noindent
In the Directed $k$-Chinese Postman Problem ($k$-DCPP), we are given a connected weighted digraph $G$ and asked to find $k$ non-empty closed directed walks covering all arcs of $G$ such that the total weight of the walks is minimum. Gutin, Muciaccia and Yeo (Theor. Comput. Sci. 513 (2013) 124--128) asked for the parameterized complexity of $k$-DCPP when $k$ is the parameter. We prove that the $k$-DCPP is fixed-parameter tractable.

We also consider a related problem of finding $k$ arc-disjoint directed cycles in an Euler digraph, parameterized by $k$. Slivkins (ESA 2003) showed that this problem is W[1]-hard for general digraphs. Generalizing another result by Slivkins, we prove that the problem is fixed-parameter tractable for Euler digraphs. The corresponding problem on vertex-disjoint cycles in Euler digraphs remains W[1]-hard even for Euler digraphs.
\end{abstract}

\section{Introduction}
A digraph $H$ is {\em connected} if the underlying undirected graph of $H$ is connected.
Let $G=(V,A)$ be a connected digraph, where each arc $a \in A$ is assigned a non-negative integer weight $\omega(a)$ ($G$ is a {\em weighted digraph}).  The {\sc Directed Chinese Postman Problem} is a well-studied polynomial-time solvable problem in combinatorial optimization \cite{BaGu2009,EdJo1973,LiZh1988}.


 \begin{namedefn}%
   {{\sc Directed Chinese Postman Problem (DCPP)}}%
   {A connected weighted digraph $G=(V,A).$}%
   {Find a minumum total weight closed directed walk $T$  \newline on $G$  such that every arc of $G$
    is contained in $T$.}%
 \end{namedefn}

\noindent In this paper, we will investigate the following generalisation of DCPP.

\begin{namedefn}%
   {Directed $k$-Chinese Postman Problem ($k$-DCPP)}%
   {A connected weighted digraph $G=(V,A)$ and
   an integer $k$.}%
   {Find a minimum total weight set of  $k$ non-empty \newline closed directed walks such that every
   arc of $G$ is \newline contained  in at least one of them.}%
 \end{namedefn}
Note that the $k$-DCPP can be extended to directed multigraphs (that may include parallel arcs but no loops), but the extended version can be reduced to the one on digraphs by subdividing parallel arcs and adjusting weights appropriately. Since it is more convenient, we consider the $k$-DCPP for digraphs only.

 In the literature, the undirected version of $k$-DCPP, abbreviated $k$-UCPP, has also been studied. 
 If a vertex $v$ of $G$ is part of the input and we require that each of the $k$ walks contains $v$ then
 the $k$-DCPP and $k$-UCPP
 are polynomial-time solvable
 \cite{Zha1992,Per1994}. However,
 in general the $k$-DCCP is NP-complete \cite{GuMuYe2013}, as is the $k$-UCPP  \cite{GuMuYe2013,Tho1997}.

Lately research in parameterized algorithms and complexity\footnote{For terminology and results on parameterized algorithms and complexity we refer the reader to the monographs~\cite{DowneyFellows99,FlumGrohe06,Niedermeier06}.} for the CPP and its generalizations was summarized in \cite{BeNiSoWe} and reported in \cite{Sor2013}.
Several recent results described there are of Niedermeier's group who identified a number of practically useful parameters for the CPP and its generalizations, obtained several interesting results and posed some open problems, see, e.g. \cite{DoMoNiWe2013,SoBeNiWe2011,SoBeNiWe2012}.
van Bevern {\em et al.} \cite{BeNiSoWe} and Sorge \cite{Sor2013} suggested to study the $k$-UCPP as a parameterized problem with parameter $k$ and asked whether the $k$-UCPP is fixed-parameter tractable, i.e. can be solved by an algorithm of running time $O(f(k)n^{O(1)})$, where $f$ is a function of $k$ only and $n= |V|$.

Gutin,  Muciaccia and Yeo \cite{GuMuYe2013} proved that the $k$-UCPP is fixed-parameter tractable. Observing that their approach for the $k$-UCPP is not applicable to the $k$-DCPP, the authors of  \cite{GuMuYe2013} asked for the parameterized complexity of $k$-DCPP parameterized by $k$. In this paper, we show that the $k$-DCPP is also fixed-parameter tractable.

\begin{theorem}\label{thm:main}
The $k$-DCPP is fixed-parameter tractable.
\end{theorem}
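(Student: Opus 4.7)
The plan is to combine a polynomial-time solution of the classical DCPP with the announced FPT cycle-packing result for Euler digraphs. First I reformulate $k$-DCPP: a feasible set of $k$ closed walks covering every arc of $G$ corresponds exactly to an integer multiset $D \geq 0$ of arcs of $G$ such that the multidigraph $G + D$ is Eulerian and can be decomposed into $k$ arc-disjoint closed walks, with cost $\omega(G + D)$. Since $G$ is connected, an Eulerian $G + D$ admits such a decomposition if and only if its cycle packing number $\mu(G + D)$ (the maximum number of arc-disjoint directed cycles in an arc-partition) is at least $k$: any cycle decomposition with at least $k$ cycles can be merged pairwise at shared vertices into exactly $k$ closed walks, and conversely any $k$ closed walks refine into at least $k$ cycles. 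Thus $k$-DCPP reduces to minimizing $\omega(G + D)$ subject to $G + D$ being Eulerian with $\mu(G + D) \geq k$.

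The algorithm then proceeds as follows: (i) compute a minimum weight Eulerian extension $D^*$ of $G$ via the standard min-cost flow formulation of DCPP, in polynomial time; (ii) apply the announced FPT algorithm for $k$ arc-disjoint cycles in Euler digraphs to $G + D^*$, determining either that $\mu(G + D^*) \geq k$ (in which case output a decomposition of $G + D^*$ into $k$ arc-disjoint closed walks of weight $\omega(G + D^*)$) or, by iterating the FPT algorithm on smaller parameter values, the exact value $\mu^* < k$; (iii) in the latter case, compute a minimum weight directed cycle $C^*$ of $G$ (polynomial time), append $k - \mu^*$ copies of $C^*$ to $D^*$, and output a decomposition of the resulting Eulerian extension into $k$ closed walks of weight $\omega(G + D^*) + (k - \mu^*) \omega(C^*)$. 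Each step runs in FPT time in $k$.

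The main obstacle is the matching lower bound for case (iii): showing that no alternative Eulerian extension $D'$ with $\mu(G + D') \geq k$ has weight strictly less than $\omega(D^*) + (k - \mu^*) \omega(C^*)$. The intended argument, using min-cost flow theory, is that any such $D'$ can (for a suitably selected min-cost $D^*$) be written as $D^* + E$ with $E$ a non-negative balanced integer arc function, hence a union of directed cycles of $G$, each of weight at least $\omega(C^*)$ and contributing at most one unit to the cycle packing; this yields $\omega(E) \geq (\mu(G + D') - \mu^*)\,\omega(C^*) \geq (k - \mu^*)\,\omega(C^*)$. A delicate point is that $\mu(G + D^*)$ may a priori depend on the choice of min-cost Eulerian extension within the optimal face of the DCPP polytope, so one must either argue invariance of $\mu$ among integer min-cost extensions or effectively select the $D^*$ that maximizes $\mu(G + D^*)$, possibly by combining the min-cost flow computation with the cycle-packing algorithm on near-optimal extensions.
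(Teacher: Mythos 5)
Your reformulation of the $k$-DCPP as ``find a cheapest non-negative arc multiset $D$ with $G+D$ balanced and containing $k$ arc-disjoint cycles'' is fine, but the optimality of step (iii) is not merely the ``main obstacle'' -- it is false, so the algorithm itself is incorrect. The inequality your intended lower bound rests on (each added cycle contributes at most one unit to the cycle packing) fails: adding a \emph{single} cycle of $G$ to an Euler multigraph can raise the maximum number of arc-disjoint cycles by much more than one. Concretely, let $G$ consist of the forward $4$-cycle $1\to2\to3\to4\to1$ with each arc of weight $2$, and, for each forward arc $uv$, both the reverse arc $vu$ of weight $2$ and a subdivided reverse path $v\to s_{uv}\to u$ with both arcs of weight $1$. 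Then $G$ is Eulerian and strongly connected, so (all weights being positive) $D^*=\emptyset$ is the \emph{unique} minimum-cost Eulerian extension, which kills your ``choose the best $D^*$'' escape route. One checks that the maximum number of arc-disjoint cycles in $G$ is $5$ (four $2$-cycles or triangles plus one remaining reverse cycle; six is impossible since at most four cycles can contain forward arcs and at most two can avoid them, and equality would force four all-forward cycles), and the cheapest cycle $C^*$ has weight $4$. Take $k=8$. Your algorithm outputs $\omega(G)+(8-5)\cdot 4=\omega(G)+12$. But adding one copy of the forward $4$-cycle (weight $8$) gives a multigraph that decomposes into $8$ arc-disjoint cycles: for each forward arc $uv$, the $2$-cycle $\{uv,vu\}$ and the triangle $\{uv, vs_{uv}, s_{uv}u\}$ using the two copies of $uv$; these $8$ closed walks cover every arc and have total weight $\omega(G)+8<\omega(G)+12$. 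Independently of this, the claimed decomposition $D'=D^*+E$ with $E\ge 0$ is also unavailable in general: a competing feasible extension need not dominate any fixed min-cost extension, and the flow-theoretic difference decomposes into residual cycles that \emph{remove} copies from $D^*$ (this is exactly the delicate bookkeeping carried out, for a different and weaker conclusion, in Lemma~\ref{lem:multiplicity}).

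This is also why the paper does not attempt any closed-form expression of the $k$-DCPP optimum in terms of the DCPP optimum and cheap cycles. Its route is: compute an optimal DCPP solution $T$; either find $k$ arc-disjoint cycles in the derived Euler multigraph $G_T$, in which case Lemma~\ref{lem:kdisjointcycles1} shows the $k$-DCPP optimum equals the DCPP optimum and yields a solution in polynomial time, or obtain from Lemma~\ref{lem:ptww} a vertex ordering of cutwidth bounded by a function of $k$; in the latter case Lemma~\ref{lem:Edeg} bounds out-degrees, Lemma~\ref{lem:multiplicity} guarantees an optimal $k$-DCPP solution of multiplicity at most $k$, and the instance is then solved \emph{exactly} as a $k[1,k]$-DWCP by the dynamic program of Theorem~\ref{thm:dpa}. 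A black-box use of the $k$-ADCP-{\sc Euler} algorithm (Theorem~\ref{thm:main2}) to compute a packing number cannot substitute for that exact optimization step; if you want to salvage your outline, you would need to replace step (iii) by an exact search over augmentations, which is essentially what the cutwidth-based dynamic programming does.
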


Our proof is very different from that in \cite{GuMuYe2013} for the $k$-UCPP. While the latter proof was based on a simple reduction to a polynomial-size kernel, we give a fixed-parameter algorithm directly using significantly more powerful tools. In particular, we use an {\em approximation} algorithm of Grohe and Gr{\"u}ber \cite{GrGr2007} for the problem of finding the maximum number $\nu_0(D)$ of vertex-disjoint directed cycles in a digraph $D$ (this algorithm is based on the celebrated paper by Reed {\em et al.} \cite{ReRoSeTh1996} on bounding $\nu_0(D)$ by a function of $\tau_0(D)$, the minimum size of a feedback vertex set of $D$). We also use
the well-known fixed-parameter algorithm of Chen {\em et al.} \cite{ChLiLuOsRa} for the feedback vertex set problem on digraphs.

We also consider the following well-known problem related to the $k$-DCPP.

\begin{namedefn}%
   {$k$-Arc-Disjoint Cycles Problem ($k$-ADCP)}%
   {A digraph $D$ and
   an integer $k$.}%
   {Decide whether $D$ has $k$ arc-disjoint directed cycles.}%
 \end{namedefn}

 Crucially, we are interested in the $k$-ADCP because given a set of $k$ arc-disjoint cycles, we can solve the $k$-DCPP in polynomial time (see Lemma \ref{lem:kdisjointcycles1}). However, this problem is important in its own right.

The problem is NP-hard in general but polynomial-time solvable for planar digraphs \cite{Luc1976}. In fact, for planar digraphs the maximum number of arc-disjoint directed cycles equals the minimum size of a feedback arc set, see, e.g, \cite{BaGu2009}. It is natural to consider $k$ as the parameter for the $k$-ADCP.
It follows easily from the results of Slivkins \cite{Sli2003} that the $k$-ADCP is W[1]-hard. It remains W[1]-hard for quite restricted classes of directed multigraphs, e.g., for directed multigraphs
which become acyclic after deleting two sets of parallel arcs \cite{Sli2003}.
Here we show that the $k$-ADCP-{\sc Euler}, the
$k$-ADCP on Euler digraphs, is fixed-parameter tractable, generalizing a result in  \cite{Sli2003} (Theorem 4).

\begin{theorem}\label{thm:main2}
The $k$-ADCP-{\sc Euler} is fixed-parameter tractable.
\end{theorem}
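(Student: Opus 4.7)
The strategy mirrors the proof of Theorem~\ref{thm:main}: combine the Grohe--Gr\"uber approximation for packing vertex-disjoint directed cycles with the FPT algorithm of Chen et al.\ for directed feedback vertex set, and then exploit the Euler structure of $D$ once a small feedback vertex set has been extracted. First, I would apply the Grohe--Gr\"uber algorithm to $D$; if it produces at least $k$ vertex-disjoint cycles, I output YES, since vertex-disjoint cycles are in particular arc-disjoint. Otherwise, the approximation guarantee combined with the Reed--Robertson--Seymour--Thomas bound on $\tau_0$ in terms of $\nu_0$ gives $\tau_0(D) \le g(k)$ for some computable $g$. I then invoke the algorithm of Chen et al.\ to compute a feedback vertex set $X \subseteq V(D)$ with $|X| \le g(k)$.

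Given such an $X$, every directed cycle of $D$ passes through $X$, and, being vertex-simple, visits at most $|X|$ vertices of $X$ in a well-defined cyclic order; call this its \emph{signature}. The number of distinct signatures is bounded by a function of $|X|$, and hence the number of multisets of $k$ signatures is bounded by some $f(k)$. I would enumerate all such multisets and, for each, decide whether $k$ arc-disjoint cycles realizing the prescribed signatures exist in $D$.

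For a fixed multiset of signatures, realizing the cycles amounts to packing, arc-disjointly, at most $k \cdot |X|$ paths in $D$, each from one prescribed vertex of $X$ to another and with internal vertices in $V \setminus X$, together with some direct arcs between $X$-vertices. Here the Euler structure of $D$ is crucial: the arcs used by any arc-disjoint family of cycles form an Eulerian sub-digraph (since removing an arc-disjoint directed cycle preserves in/out-degree balance at every vertex), so the realizability check can be reduced to a cycle-packing question on an auxiliary Eulerian multigraph on vertex set $X$, whose arcs encode the $X$-to-$X$ paths through $V \setminus X$ extracted from a suitable flow decomposition in $D - X$, together with the direct arcs. Since $|X| \le g(k)$, after merging parallel arc classes this auxiliary multigraph has size bounded by a function of $k$ and can be analyzed within an FPT time budget.

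The main obstacle is the realizability step. Packing arc-disjoint paths with prescribed endpoints is W[1]-hard even on DAGs, by Slivkins's reduction, so the algorithm cannot rely on a generic arc-disjoint paths routine; it must exploit both that all terminals lie in the small set $X$ and that $D$ is Eulerian, which together let us replace the multi-commodity arc-disjoint paths question on $D - X$ by a cycle-packing question on a bounded Eulerian quotient of $D$ in which pairings of in-segments with out-segments at each vertex can be carried out by an Euler-type argument. Constructing this quotient correctly, and proving that its cycle packings correspond exactly to cycle packings of $D$ with the given signatures, is the technical crux and is where the generalization of the Slivkins ESA~2003 result enters.
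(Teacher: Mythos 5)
There is a genuine gap, and it sits exactly where you place it yourself: the realizability step. After fixing a feedback vertex set $X$ with $|X|\le g(k)$ and a multiset of $k$ signatures, deciding whether arc-disjoint cycles realizing those signatures exist is a multicommodity arc-disjoint routing problem with all terminals in $X$. That problem is NP-hard in general digraphs already for two demand pairs and, as you note via Slivkins, W[1]-hard in the number of pairs even on DAGs, so your whole argument rests on the promised ``bounded Eulerian quotient'' on $X$ --- and that object is asserted, not constructed. The natural attempts fail: $D-X$ is in general \emph{not} balanced (deleting $X$ destroys the Euler condition at every neighbour of $X$), so no Euler-type pairing of in-segments with out-segments is available there; and fixing one flow/path decomposition of the arcs meeting $X$ and contracting the resulting $X$-to-$X$ paths into arcs of an auxiliary multigraph is not solution-preserving, because a cycle packing in $D$ may use segments that recombine pieces of several decomposition paths, and re-pairing segments at internal vertices changes which terminals get connected --- precisely the phenomenon that makes directed arc-disjoint paths hard. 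Also, a minor imprecision: the arcs of an arc-disjoint family of cycles form a \emph{balanced} sub-multidigraph, not necessarily an Eulerian (connected) one. Since you explicitly defer the construction of the quotient and the exactness of the correspondence as ``the technical crux'', the proposal does not yet constitute a proof.

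For contrast, the paper never routes anything through $D-X$ and never commits to terminal pairs. It first disposes of the case of a vertex of out-degree at least $k$ by Lemma~\ref{lem:Edeg}; otherwise it uses the cycles-or-small-feedback-\emph{arc}-set dichotomy (Lemma~\ref{lem:cyclesORfas}) and converts a feedback arc set of size at most $g(k)$ into a vertex ordering of cutwidth at most $2g(k)$, exploiting the Euler property that across every cut the number of forward arcs equals the number of backward arcs (Lemma~\ref{lem:ptww}). It then observes that $D$ has $k$ arc-disjoint cycles if and only if the $k[0,1]$-DWCP instance $(D,k)$ has a finite solution (every non-empty closed walk contains a cycle), and solves the latter by the dynamic program of Theorem~\ref{thm:dpa} over the bounded-cutwidth ordering; Lemma~\ref{lem:dpjustcycles} is what lets the DP track only per-walk arc multiplicities and degree balance across each small cut, with no connectivity or terminal bookkeeping at all. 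If you want to salvage your route, you would essentially have to prove the quotient correspondence you sketch, which is a separate and nontrivial result; the cutwidth-plus-DP route avoids the need for it entirely.
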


Interestingly, the problem of deciding whether a digraph has $k$ vertex-disjoint directed cycles, which is  W[1]-hard (also easily follows from the results of Slivkins \cite{Sli2003}), remains W[1]-hard on Euler digraphs.
Indeed, consider a digraph $D$ and let $\nu_0(D)$ denote the maximum number of vertex-disjoint directed cycles in $D$. Construct a new digraph $H$ from $D$ by adding two new vertices $x$ and $y$, arcs $xy$ and $yx$ and the following extra arcs between $x$ and the vertices of $D$: for each $v\in V(D)$ add $\max \{ d^-(v)-d^+(v),0\}$ parallel arcs $vx$ and $\max\{d^+(v)-d^-(v),0\}$ parallel arcs $xv,$ where $d^-(v)$ and $d^+(v)$ are the in-degree and out-degree of $v$, respectively. To eliminate parallel arcs, it remains to subdivide all arcs between $x$ and $V(D)$. Now it is sufficient to observe that $H$ is Euler and $\nu_0(H)=\nu_0(D)+1$. 

To prove Theorems \ref{thm:main} and  \ref{thm:main2} we study the following problem that generalizes the $k$-DCPP (in the case when an optimal solution
exists in which the number of times each arc is visited by every closed
walk is restricted)
and $k$-ADCP.
Let $b \le c$ be non-negative integers.

\begin{namedefn}%
   {Directed $k$-Walk $[b,c]$-Covering Problem ($k[b,c]$-DWCP)}%
   {A connected weighted digraph $G=(V,A)$ and\newline
   an integer $k$.}%
   {Find a minimum total weight set of $k$ non-empty \newline closed directed walks in which every
   arc of $G$ appears \newline between $b$ and $c$ times.}%
 \end{namedefn}
 
 Let $D$ be a digraph. For a vertex ordering $\nu = (v_1, v_2, \dots, v_n)$ of $V(D)$, the {\em cutwidth} of $\nu$ is the maximum number of arcs between $\{1, \dots, i\}$ and $\{i+1, \dots n\}$ over all $i \in [n]$.
The {\em cutwidth} of $D$ is the minimum cutwidth of all vertex orderings of $V(D)$.

 In Section \ref{sec:dpa} we will prove the following theorem.

 \begin{theorem}\label{thm:dpa}
   Let $(G,k)$ be an instance of $k[b,c]$-DWCP and suppose we are given a vertex ordering $\nu =~(v_{1},v_{2},\dots,v_{n})$ of $G$ with cutwidth at most $p$.
  Then $(G,k)$ can be solved in time  $O^*((c2^k)^p4^k)$.
 \end{theorem}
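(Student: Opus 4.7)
The plan is a left-to-right dynamic program along the vertex ordering $\nu = (v_1, \ldots, v_n)$. For each position $i \in \{0, 1, \ldots, n\}$, let $V_i = \{v_1, \ldots, v_i\}$ and let $A_i$ be the set of arcs with exactly one endpoint in $V_i$; by hypothesis $|A_i| \le p$. The DP table $T_i$ is indexed by ``states'' that summarise a partial $k[b,c]$-cover of the subgraph induced on $V_i$ together with the crossing arcs $A_i$, and stores the minimum total weight achievable for each such summary. The base case $T_0$ has the single empty state of weight $0$; the final answer is read off from $T_n$.

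A state at position $i$ consists of two parts. First, per-arc: for each $a \in A_i$, record its total multiplicity $m(a) \in \{b, \ldots, c\}$ and a subset $S(a) \subseteq [k]$ indicating which of the $k$ walks use $a$, with the invariant that the $m(a)$ uses of $a$ are distributed among the walks of $S(a)$ according to a canonical rule depending only on $m(a)$ and $S(a)$ (for instance, round-robin). This contributes a factor of at most $c \cdot 2^k$ per crossing arc, hence $(c \cdot 2^k)^p$ in total. Second, per-walk: for each $j \in [k]$, a constant-size status tag encoding, e.g., whether the walk has been started (it must be non-empty in the final solution) and a bit of connectivity information linking its ``dangling ends'' on the two sides of the cut; this contributes a factor of $4^k$. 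The total number of states is thus $O^*((c \cdot 2^k)^p \cdot 4^k)$.

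The transition from $T_i$ to $T_{i+1}$ introduces the vertex $v_{i+1}$ and enumerates all ways its incident arcs can be resolved locally: arcs between $V_i$ and $v_{i+1}$ leave $A_i$ (becoming internal), while arcs between $v_{i+1}$ and $V \setminus V_{i+1}$ enter $A_{i+1}$. For each walk passing through $v_{i+1}$, we pair its incoming uses with its outgoing uses at $v_{i+1}$, respecting the per-walk Euler balance and consistency with both the canonical distribution rule and the walk status tags. Each transition costs a bounded function of $k$, $c$, and $p$, so the total running time is $O^*((c \cdot 2^k)^p \cdot 4^k)$ as claimed. The answer is the minimum weight among states of $T_n$ in which $A_n = \emptyset$, every walk has been started, and every walk's status tag certifies closure into a single non-empty closed walk.

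The main obstacle is to justify the compactness of the encoding. In particular, the per-arc state records only a subset of $[k]$ rather than a vector of per-walk multiplicities, and the per-walk state is constant-sized. To ensure the DP is correct, one must prove that every optimal $k[b,c]$-cover can be normalised so that (i) the per-walk distribution of arc uses on each arc is determined canonically by $(m(a), S(a))$, and (ii) the connectivity patterns of walks across each cut can be reduced to one of a constant number of types. Typically this will follow from local exchange arguments that swap arc usages between two walks passing through a common vertex, together with re-routing arguments for closed walks that leave the total weight unchanged. Setting up the invariants so that the transitions faithfully implement this normalisation is the most delicate part of the proof; once the normal form is in place, the remaining bookkeeping is standard DP.
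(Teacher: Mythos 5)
Your DP skeleton matches the paper's: sweep the cutwidth ordering left to right, index states by the at most $p$ crossing arcs with roughly $c2^k$ choices per arc plus $O(4^k)$ global per-walk information, and take the minimum weight at the end. But the two normalisation claims you defer are not incidental ``delicate bookkeeping''; one of them is the actual content of the proof, and as stated your version of it is doubtful. Your claim (ii) --- that the connectivity pattern of each walk across a cut can be summarised by a constant-size tag --- is not something a local re-routing argument gives you: certifying that the arc uses assigned to walk $j$ close up into a \emph{single} closed walk is exactly the kind of connectivity constraint that normally forces partition/matching information over the crossing arc-ends into the state, which is not bounded by a constant per walk. The paper sidesteps this entirely with a structural lemma (Lemma~\ref{lem:dpjustcycles}): it suffices to find $k$ non-empty \emph{balanced but not necessarily connected} multigraphs $G_1,\dots,G_k$ using arcs of $G$, where $G_1$ may have parallel arcs and $G_2,\dots,G_k$ may not, with each arc covered between $b$ and $c$ times. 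The conversion back to $k$ closed walks is a global argument: when $b>0$ every arc of the connected graph $G$ is covered, so stray components can be shifted into $G_1$ and then absorbed into whichever $G_i$ meets them, after which each $G_i$ is Euler and one takes Euler tours; when $b=0$ one simply replaces each $G_i$ by a cycle. Consequently the DP never tracks connectivity at all --- only per-vertex balance for each index $j$ and a set $S\subseteq[k]$ of non-empty parts. Without this relaxation (or an equivalent proof), your DP is not justified: it either accepts ``solutions'' whose parts are disconnected, with no argument that they can be reassembled, or it needs a richer state than you budget for.

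Your claim (i) has a related problem. You propose to record only the total multiplicity $m(a)$ and the set $S(a)$ of walks using $a$, with a round-robin canonical distribution, to be justified by ``local exchange arguments that swap arc usages between two walks.'' Moving one copy of an arc $uv$ from one closed walk to another changes the out-degree at $u$ and in-degree at $v$ inside both walks, so such a swap breaks closedness unless it is compensated globally; the exchange argument you gesture at does not exist in that local form. The paper's normalisation is different and is proved: by Lemma~\ref{lem:dpjustcycles}, the parts indexed $2,\dots,k$ can be assumed to use each arc at most once (each is either the full trace of its walk, if simple, or a single cycle, with the surplus dumped into $G_1$), so the state stores the exact vector $\phi(a,\cdot)$ with $\phi(a,1)\in[0,c]$ and $\phi(a,j)\in\{0,1\}$ for $j\ge 2$, giving the same $(c+1)2^{k-1}\le c2^k$ count per arc without any unproven canonical-distribution assumption. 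In short, the missing idea is the reduction from ``$k$ closed walks'' to ``$k$ balanced (multi)graphs, all but one simple,'' together with the reassembly argument that uses $b>0$ and the connectivity of $G$ (and the separate $b=0$ case); with that lemma in hand your table sizes and transitions go through essentially as in the paper, but without it the proposal has a genuine gap.
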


Note that when
$c$ and $p$
are upper-bounded by functions of $k$,  the algorithm of this theorem is fixed-parameter.


In order to apply Theorem \ref{thm:dpa} to the $k$-DCPP and $k$-ADCP-{\sc Euler}, we first need to find a vertex ordering of bounded cutwidth.
This is done using Lemma \ref{lem:ptww}, which given an Euler directed graph, either finds a vertex ordering with cutwidth bounded by a function of $k$, or finds $k$ arc-disjoint cycles. (For the $k$-DCPP, we apply Lemma \ref{lem:ptww} to
an Euler directed multigraph
derived from a solution to the DCPP on $G$.) If $k$ arc-disjoint cycles are found, then the $k$-ADCP-{\sc Euler} is solved. In the case of the $k$-DCPP, it remains to use Lemma \ref{lem:kdisjointcycles1}, which shows that given $k$ arc-disjoint cycles
(in the derived directed multigraph),
we can solve the $k$-DCPP on $G$ in polynomial time.

If we find a vertex ordering of cutwidth $p(k)$, we can solve the $k$-ADCP-{\sc Euler}  by applying Theorem \ref{thm:dpa} with $b = 0, c = 1$.
In the case of the $k$-DCPP, $b = 1$ and it remains to find an upper bound on $c$. This is done using Lemma \ref{lem:multiplicity} proved in Section \ref{sec:struct}, which
shows that if an optimal solution of DCPP traverses
each arc less than $k$ times then there is an optimal solution for the
$k$-DCPP such that no arc is visited more than $k$ times in total by the
$k$ walks of the solution.
If an optimal solution of DCPP visits an arc at least $k$ times, then the derived graph for this solution contains at least $k$ arc-disjoint cycles and again we may use  Lemma \ref{lem:kdisjointcycles1}.
Thus, starting from an arbitrary optimal solution of DCPP, we may either apply Theorem~\ref{thm:dpa} with $c=k$, or Lemma~\ref{lem:kdisjointcycles1}.

The paper is organised as follows. In Section \ref{sec:struct}, we prove six lemmas providing structural results for the $k$-DCPP and $k$-ADCP-{\sc Euler}. In Sections \ref{sec:dpa} and \ref{sec:main}, we prove Theorem \ref{thm:dpa} and the main two results of the paper, Theorems \ref{thm:main} and  \ref{thm:main2}. We conclude the paper with brief discussions of open problems in Section~\ref{sec:dis}.

In what follows, all walks and cycles in directed multigraphs are directed. For a positive integer $p$, $[p]$ will denote the set $\{1,2,\ldots, p\}$.
For integers $a \le b$, $[a,b]$ will denote the set $\{a, a+1, \ldots, b\}$.
Given a directed graph $D$, a \emph{feedback vertex set} for $D$ is a set $S$ of vertices such that $D - S$ contains no directed cycles.
A \emph{feedback arc set} for $D$ is a set $F$ of arcs such that $D - F$ contains no directed cycles.
 A vertex $v$ of a digraph is {\em balanced} if the
in-degree of $v$ equals its out-degree.

\section{Structural Results and Fixed-Parameter Algorithms}\label{sec:struct}

Recall that a directed multigraph $H$ is Euler (i.e., has an Euler trail) if and only if $H$ is connected and
every vertex of $H$ is balanced
\cite{BaGu2009}.

The next lemma is a simple sufficient condition for an Euler digraph to contain $k$ arc-disjoint cycles.

\begin{lemma}\label{lem:Edeg}
Every Euler digraph $D$ having a vertex of out-degree at least $k\ge 1$,
contains $k$ arc-disjoint cycles that can be found in polynomial time.
\end{lemma}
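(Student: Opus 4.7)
The plan is to reduce to the classical fact that an Euler digraph admits a partition of its arc set into simple directed cycles, and then count how many of those cycles must pass through the given high-out-degree vertex.

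First, I would establish the cycle decomposition. If $D$ has any arc, some vertex has positive out-degree; starting from such a vertex and repeatedly following an unused out-arc never gets stuck, because each time we enter a new vertex $u$ we consume one of its incoming arcs, so by balance $u$ still has an unused outgoing arc. Finiteness of $V(D)$ forces a repeated vertex, and the segment of the walk between the two visits is a simple directed cycle $C$. Deleting $A(C)$ leaves every vertex balanced (each cycle vertex loses exactly one in-arc and one out-arc), so we can iterate. The result is a partition $\mathcal{C}$ of $A(D)$ into simple directed cycles, obtained in polynomial time (at most $|A(D)|$ iterations, each implementable by a straightforward DFS-style traversal).

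Next, let $v$ be the vertex with $d^+(v) \ge k$. Each cycle in $\mathcal{C}$ is simple, so it visits $v$ at most once and therefore contains at most one arc leaving $v$. Since $\mathcal{C}$ partitions the entire arc set, the $d^+(v) \ge k$ out-arcs of $v$ are distributed among the cycles of $\mathcal{C}$, and hence at least $k$ distinct members of $\mathcal{C}$ contain an out-arc of $v$. Any $k$ such cycles form the desired family of $k$ arc-disjoint directed cycles, and they are produced in polynomial time together with the decomposition.

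The only subtlety I expect is that after removing $A(C)$ the digraph may no longer be connected, so one cannot invoke the Euler trail property directly in the inductive step. This is not a real obstacle, however, because the decomposition argument above only needs balance of every vertex, which is preserved by deleting the arc set of a cycle; connectedness is never used after the first step. Hence the iterative scheme continues unimpeded, giving both the partition and the $k$ arc-disjoint cycles in polynomial time.
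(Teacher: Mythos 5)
Your proof is correct and rests on the same mechanism as the paper's: repeatedly extracting a cycle from a digraph in which every vertex is balanced, which preserves balance and removes at most one out-arc of the high-out-degree vertex. The paper packages this as an induction on $k$ (its $k$ cycles need not pass through $v$), while you build the full cycle decomposition and count the cycles through $v$ --- a cosmetic difference only, and your remark that only balance (not connectivity) is needed after the first deletion is exactly the point the paper's induction implicitly relies on.
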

\begin{proof}
For $k=1$, it is true as $D$ has a cycle that can be found in polynomial time. Let $k\ge 2$ and let $C$ be a cycle in $D$. Observe that after deleting the arcs of $C$, $D$ has a vertex of out-degree at least $k-1$ and we are done by induction hypothesis.\qed
\end{proof}

Reed {\em et al.}  \cite{ReRoSeTh1996} proved that there is a function $f:\ \mathbb{N}\rightarrow \mathbb{N}$ such that for every $k$, if a digraph $D$ does not have $k$ arc-disjoint cycles, then it has a feedback arc set with at most $f(k)$ arcs. The celebrated result of Reed {\em et al.}  \cite{ReRoSeTh1996} can be easily extended to directed multigraphs by subdividing parallel arcs.
Using this result, Grohe and Gr{\"u}ber \cite{GrGr2007} showed that there is a non-decreasing and unbounded function $h:\ \mathbb{N}\rightarrow \mathbb{N}$ and
a fixed-parameter algorithm that for a digraph $D$
returns at least $h(k)$ arc-disjoint cycles if $D$ has at least $k$ arc-disjoint cycles.

Let $h^{-1}:\ \mathbb{N}\rightarrow \mathbb{N}$ be defined by $h^{-1}(q)=\min\{p:\ h(p)\ge q\}.$ Since $h$ is a non-decreasing and unbounded function, $h^{-1}$ is a non-decreasing and unbounded function.
Combining the above results, we find that for every digraph $D$, either the algorithm of Grohe and Gr\"uber returns at least $k$ arc-disjoint cycles, or $D$ has a feedback arc set of size at most $f(h^{-1}(k))$.

Chen {\em et al.} \cite{ChLiLuOsRa} designed a fixed-parameter algorithm that decides whether a digraph $D$ contains a feedback vertex set of size $k$ ($k$ is the parameter).
As this is an iterative compression algorithm, it can be easily modified to an algorithm  for finding a minimum feedback vertex set in $D$ (the running time of the latter algorithm is $q(\tau_0(D))n^{O(1)},$ where $\tau_0(D)$ is the minimum size of a feedback vertex set in $D$, $n=|V(D)|$ and $q(k)=4^kk!$).
The modified algorithm can be used for finding a minimum feedback arc set in $D$ as $D$ can be transformed, in polynomial time, into another digraph $H$ such that $D$ has a feedback arc set of size $k$ if and only if $H$ has a feedback vertex set of size $k$, see, e.g., \cite{BaGu2009} (Proposition 15.3.1).

\begin{lemma}\label{lem:cyclesORfas}
There is a function $g:\ \mathbb{N}\rightarrow \mathbb{N}$ 
and a fixed-parameter algorithm 
such that for a digraph $D$, the algorithm returns either $k$ arc-disjoint cycles or a feedback arc set of size at most $g(k)$.
\end{lemma}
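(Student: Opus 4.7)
The plan is simply to combine the two fixed-parameter ingredients recalled immediately before the lemma statement, namely the approximation algorithm of Grohe and Gr\"uber for arc-disjoint cycles and the modified Chen et al.\ algorithm for minimum feedback arc set. The key observation is that the Grohe--Gr\"uber algorithm should be called with the target threshold $h^{-1}(k)$ (rather than $k$), because its guarantee is only that it returns $h(p)$ arc-disjoint cycles when the true maximum is at least $p$.

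Given an input digraph $D$ and a parameter $k$, I would first run the Grohe--Gr\"uber algorithm on $D$ aiming for $h^{-1}(k)$ arc-disjoint cycles. If $D$ really contains that many, the algorithm outputs at least $h(h^{-1}(k))\ge k$ arc-disjoint cycles, and we can output any $k$ of them and halt. Otherwise the algorithm fails, in which case $D$ does not contain $h^{-1}(k)$ arc-disjoint cycles; by the Reed--Robertson--Seymour--Thomas theorem, extended to multigraphs via subdivision, $D$ must then admit a feedback arc set of size at most $f(h^{-1}(k))$. In this branch I would then invoke the modified Chen et al.\ minimum feedback arc set algorithm (obtained from their feedback vertex set algorithm via the standard polynomial-time reduction from feedback arc set to feedback vertex set cited in the excerpt). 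Since the minimum feedback arc set has size at most $f(h^{-1}(k))$, this runs in time $q(f(h^{-1}(k)))\cdot n^{O(1)}$ and outputs a feedback arc set of the required size.

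Taking $g(k):=f(h^{-1}(k))$ then yields the desired function, and the overall running time is a sum of two expressions of the form ``function of $k$ times a polynomial in $|V(D)|$'', so the whole procedure is fixed-parameter. There is no genuine obstacle, since the heavy lifting is done in the quoted results; the only subtlety is to call the Grohe--Gr\"uber algorithm with the inflated target $h^{-1}(k)$ so that the output guarantee meets the threshold of $k$ arc-disjoint cycles, and to make sure the feedback arc set algorithm is used in its minimization (rather than decision) version, which is why the excerpt explicitly justifies that modification beforehand.
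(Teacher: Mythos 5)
Your proposal is correct and follows essentially the same route as the paper: run the Grohe--Gr\"uber approximation with the inflated target $h^{-1}(k)$ so that success yields at least $h(h^{-1}(k))\ge k$ arc-disjoint cycles, and on failure conclude via Reed et al.\ that a feedback arc set of size at most $f(h^{-1}(k))$ exists, which the modified Chen et al.\ algorithm then finds, giving $g(k)=f(h^{-1}(k))$. Your explicit accounting of the running time of the feedback arc set step is a welcome (if routine) addition to what the paper states.
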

\begin{proof}
%
%
%
%
%
%


%
%
Run the Grohe-Gr{\"u}ber algorithm on $D$. Either the algorithm returns at least $k$ arc-disjoint cycles, or
we know that $D$ has no $h^{-1}(k)$
arc-disjoint cycles and so by the result of Reed {\em et al.} $D$ has a feedback arc set of size at most $f(h^{-1}(k)).$ We can use the algorithm of Chen {\em et al.} to find in $D$ a minimum feedback arc set. We may set $g(k)=f(h^{-1}(k))$.\qed
\end{proof}

\begin{lemma}\label{lem:ptww}
Let $g:\ \mathbb{N}\rightarrow \mathbb{N}$ be the function in Lemma \ref{lem:cyclesORfas}.
Let $D$ be an Euler directed multigraph. We can obtain either $k$ arc-disjoint cycles of $D$ or a vertex ordering of cutwidth at most $2g(k)$.
\end{lemma}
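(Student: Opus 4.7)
The plan is to apply Lemma~\ref{lem:cyclesORfas} to $D$ (after the standard reduction turning parallel arcs into length-two paths, which preserves arc-disjoint cycles and feedback arc sets up to the natural correspondence). If the algorithm returns $k$ arc-disjoint cycles, we are done immediately. Otherwise, we obtain a feedback arc set $F \subseteq A(D)$ with $|F| \le g(k)$, so $D - F$ is acyclic.

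Next I would take any topological ordering $\nu = (v_1, \ldots, v_n)$ of $D - F$ and argue that, viewed as an ordering of $V(D)$, it has cutwidth at most $2g(k)$. Fix any cut position $i$ and set $S = \{v_1, \ldots, v_i\}$. The arcs between $S$ and $V(D) \setminus S$ split into \emph{forward} arcs (tail in $S$, head outside) and \emph{backward} arcs (head in $S$, tail outside). Because $\nu$ is a topological ordering of $D - F$, every backward arc must lie in $F$, so the number of backward arcs across the cut is at most $|F| \le g(k)$.

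The remaining ingredient is the Euler hypothesis. Summing $d^+(v) - d^-(v) = 0$ over all $v \in S$ (valid because $D$ is Euler) shows that the number of arcs leaving $S$ equals the number of arcs entering $S$; that is, the forward count equals the backward count across this cut. Combining this with the bound above, the total number of arcs across the cut is at most $2|F| \le 2g(k)$. Since $i$ was arbitrary, the cutwidth of $\nu$ is at most $2g(k)$, as required.

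There is no real obstacle beyond invoking Lemma~\ref{lem:cyclesORfas}; the only delicate point is being explicit about the reduction of the multigraph $D$ to a simple digraph so that Lemma~\ref{lem:cyclesORfas} applies and its output lifts back to a feedback arc set of $D$ of size at most $g(k)$. The Euler balance argument and the topological ordering of $D - F$ then combine cleanly to yield the cutwidth bound.
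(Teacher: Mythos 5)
Your proof is correct and follows essentially the same route as the paper's: run Lemma~\ref{lem:cyclesORfas}, and if no $k$ arc-disjoint cycles are returned, take a topological ordering of the acyclic digraph $D-F$, bound backward arcs across each cut by $|F|\le g(k)$, and use the Euler balance to conclude the forward arcs match, giving cutwidth at most $2g(k)$. You are in fact slightly more explicit than the paper in two spots: you spell out the degree-summation argument that the paper delegates to a cited corollary, and you flag the subdivision step needed to apply Lemma~\ref{lem:cyclesORfas} to a multigraph, which the paper glosses over.
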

\begin{proof}
Let us run the procedure of Lemma \ref{lem:cyclesORfas} for $D$ and $k$. If we get $k$ arc-disjoint cycles, we are done. Otherwise,
we get a feedback arc set $F$ of $D$ such that $|F|\le g(k)$. Then $D' = D - F$ is  an acyclic digraph.
We let $\nu=(v_1,\ldots,v_n)$ be an acyclic ordering of $D'$, i.e., $D'$ has no arc of the form $v_iv_j$, $i>j$,
(it is well-known that such an ordering exists \cite{BaGu2009}).
Now $\nu$ is a vertex ordering for $D$ with at most $|F|$ arcs from $\{v_{i+1},\dots,v_{n}\}$ to  $\{v_{1},\dots,v_{i}\}$ for each $i\in[n-1]$,
and because $D$ is Euler there are the same number of arcs from $\{v_{1},\dots,v_{i}\}$ to $\{v_{i+1},\dots,v_{n}\}$ \cite[Corollary 1.7.3]{BaGu2009}.
So $\nu$ is a vertex ordering with cutwidth at most $2g(k)$.\qed
\end{proof}

In the rest of this section, $G=(V,A)$ is a connected weighted directed graph. For a solution $T=\{T_1,\dots,T_k\}$ to the $k$-DCPP on $G$ ($k\ge 1$), let $G_T=(V,A_T)$, where $A_T$ is a multiset containing all arcs of $A$, each as many times as it is traversed in total by $T_1\cup\dots\cup T_k$.

Lemmas \ref{thm:CPPleqkCPP} and \ref{lem:kdisjointcycles1} are similar to two simple results obtained for the $k$-UCPP in \cite{GuMuYe2013}.
Note that given $k$ closed walks which cover all the arcs of a digraph, their union is a closed walk covering all the arcs and, therefore, it is a solution for the DCPP. Hence, the following proposition holds.

\begin{lemma}\label{thm:CPPleqkCPP}
 The weight of an optimal solution for the $k$-DCPP on $G$ is not smaller than the weight of an optimal solution for the DCPP on $G.$
\end{lemma}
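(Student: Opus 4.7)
The plan is to take an arbitrary feasible solution $T=\{T_1,\dots,T_k\}$ of the $k$-DCPP on $G$ and construct from it a single closed directed walk in $G$ of the same total weight that traverses every arc of $G$. Such a walk is a feasible solution for the DCPP, and the desired inequality between the two optima then follows immediately by minimising over $T$.

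To produce this closed walk, I would work with the directed multigraph $G_T=(V,A_T)$ introduced just above the lemma statement, in which each arc of $G$ occurs with multiplicity equal to its total number of traversals by $T_1,\dots,T_k$. The key observation is that $G_T$ is an Eulerian directed multigraph. First, every vertex of $G_T$ is balanced, since balance is additive over the walks and each individual $T_i$, being a closed walk, contributes equally to in-degree and out-degree at every vertex. Second, the underlying undirected multigraph of $G_T$ is connected, because $G_T$ contains every arc of the connected digraph $G$ (each at least once, by the covering condition of the $k$-DCPP). By the standard characterisation of Eulerian directed multigraphs recalled at the start of Section~\ref{sec:struct}, $G_T$ admits an Eulerian closed trail $W$.

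Viewed as a closed directed walk in $G$, $W$ uses each arc of $G$ precisely as many times as $T_1,\dots,T_k$ do in total, and in particular uses every arc of $G$ at least once. Hence $W$ is a feasible DCPP solution on $G$, and its weight equals the total weight of $T$. As $T$ was an arbitrary $k$-DCPP solution, the optimum of the DCPP on $G$ is at most the optimum of the $k$-DCPP on $G$, which is exactly the statement of the lemma. The only nontrivial step is the verification of the Eulerian condition on $G_T$; the remainder is bookkeeping, which is why the result can legitimately be labelled a lemma rather than a theorem.
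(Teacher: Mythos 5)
Your proof is correct and follows essentially the same route as the paper, which simply remarks that the union of the $k$ closed walks is a single closed walk covering all arcs and hence a DCPP solution of the same weight. Your argument merely makes explicit (via balance and connectivity of $G_T$ and the Eulerian characterisation) the step the paper leaves implicit, so nothing further is needed.
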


\begin{lemma}\label{lem:kdisjointcycles1}
Let $T$ be an optimal solution for the DCPP on $G$.
If $G_T$ contains at least $k$ arc-disjoint cycles, then the weight of an optimal solution for the $k$-DCPP on $G$ is equal to the weight of an optimal solution of the DCPP on $G$.
Furthermore if $k$ arc-disjoint cycles in $G_T$ are given, then an optimal solution for the $k$-DCPP can be found in polynomial time.
\end{lemma}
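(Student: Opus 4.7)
The plan is to construct $k$ closed walks whose total weight equals $\omega(T)$; together with Lemma \ref{thm:CPPleqkCPP} this shows that the $k$-DCPP optimum equals the DCPP optimum and that the constructed walks are an optimal $k$-DCPP solution. Since $T$ is a single closed walk traversing every arc of $G$, the multigraph $G_T$ is connected and every vertex of $G_T$ is balanced, so $G_T$ is Euler.

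Given the arc-disjoint cycles $C_{1},\dots,C_{k}$ in $G_T$, first delete their arcs. The remaining multigraph is still balanced and can be decomposed in polynomial time into cycles $C'_{1},\dots,C'_{r}$ by the standard greedy cycle-extraction procedure (repeatedly follow out-arcs until a vertex is revisited). The collection $\mathcal{C}=\{C_{1},\dots,C_{k},C'_{1},\dots,C'_{r}\}$ then partitions the arcs of $G_T$. Next, build the auxiliary graph $\mathcal{G}$ whose vertex set is $\mathcal{C}$, with two cycles adjacent iff they share a vertex of $G$. Because $G_T$ is connected, so is $\mathcal{G}$.

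The key combinatorial step is to partition $\mathcal{C}$ into sets $S_{1},\dots,S_{k}$ such that $C_{i}\in S_{i}$ and each $S_{i}$ induces a connected subgraph of $\mathcal{G}$. Fix any spanning tree $\mathcal{T}$ of $\mathcal{G}$ and view $C_{1},\dots,C_{k}$ as marked vertices of $\mathcal{T}$. A simple induction on $k$ — at each step pick a marked leaf-most vertex $C_{j}$, i.e. one whose subtree (in some rooting at another marked vertex) contains no other marked vertex, cut the edge attaching this subtree to the rest, and recurse — produces $k-1$ edge cuts whose removal splits $\mathcal{T}$ into $k$ subtrees, each containing exactly one marked $C_{i}$. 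Take $S_{i}$ to be the vertex set of the subtree containing $C_{i}$. This is the only nontrivial step, and it is standard.

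Finally, for each $i$ the union $\bigcup_{C\in S_{i}} C$ is connected (since $S_{i}$ is connected in $\mathcal{G}$) and balanced (as a union of cycles), hence Euler; let $W_{i}$ be a closed Euler trail of this union, computable in polynomial time. The walks $W_{1},\dots,W_{k}$ are non-empty (each contains $C_{i}$), together traverse every arc of $G_T$ exactly as often as $T$ does (so every arc of $G$ is covered), and their total weight is $\omega(G_T)=\omega(T)$. This equals the DCPP optimum, so by Lemma \ref{thm:CPPleqkCPP} the $W_{i}$'s form an optimal $k$-DCPP solution, and the entire construction runs in polynomial time given the $C_{i}$'s.
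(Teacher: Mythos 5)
Your proof is correct, and its overall strategy coincides with the paper's: use the $k$ given arc-disjoint cycles as seeds, distribute the remaining (balanced) arcs of $G_T$ among them to obtain $k$ non-empty closed walks of total weight $\omega(G_T)=\omega(T)$, and conclude optimality from Lemma \ref{thm:CPPleqkCPP}. The difference is only in how the leftover arcs are distributed: the paper deletes the $k$ cycles, takes an optimal DCPP solution (an Euler tour) of each connected component of the remainder, and appends it to a cycle sharing a vertex with it --- implicitly using connectivity of $G_T$ to guarantee such a cycle exists --- whereas you decompose the remainder further into cycles, form the cycle-intersection graph, and split a spanning tree at $k-1$ edges so that each part contains exactly one of the given cycles. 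Your route is slightly heavier machinery for the same effect, but it has the merit of making the connectivity argument (which the paper leaves unstated) completely explicit; the paper's component-based attachment is shorter and avoids the spanning-tree step.
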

\begin{proof}
Note that $G_T$ is an Euler directed multigraph and so every vertex of $G_T$
is balanced.
Let $\cal C$ be any collection of $k$ arc-disjoint cycles in $G_T$. Delete all arcs of $\cal C$ from $G_T$ and observe that every vertex in the remaining directed multigraph $G'$
is balanced.
Find an optimal DCPP solution for every connected component of $G'$ and append each such solution $F$ to a cycle in $\cal C$ which has a common vertex with $F$.  As a result, in polynomial time, we obtain a collection $Q$ of $k$ closed walks for the $k$-DCPP on $G$ of the same weight as $T$. So $Q$ is optimal by Lemma \ref{thm:CPPleqkCPP}.\qed
\end{proof}

For a directed multigraph $D$, let $\mu_D(xy)$ denote the multiplicity of an arc $xy$ of $D$. The {\em multiplicity} $\mu(D)$ of $D$ is the maximum of the multiplicities of its arcs.
Thus, Lemma \ref{lem:kdisjointcycles1} implies that if $\mu(G_T) \ge k$
for any optimal solution $T$ of the DCPP on $G$, then there is an optimal
solution of the $k$-DCPP on $G$ with weight equal to the weight of $G_T$.
The next lemma helps us in the case that $\mu(G_T) \le k-1$.

\begin{lemma}\label{lem:multiplicity}
Let $T$ be an optimal solution of the DCPP on $G$ such that $\mu(G_T)\le k-1$. Then there is an optimal solution $W$ for the $k$-DCPP on $G$ such that $\mu(G_W)\le k$.
\end{lemma}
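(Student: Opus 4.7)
The plan is to split into two cases based on $m := \nu_a(G_T)$. If $m \ge k$, then Lemma~\ref{lem:kdisjointcycles1} directly yields an optimal $k$-DCPP solution $W$ with $G_W = G_T$ as multisets of arcs, so $\mu(G_W) = \mu(G_T) \le k-1 < k$ and we are done.

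Suppose now that $m < k$. The central ingredient is the inequality $\mu(H) \le \nu_a(H)$ valid for any Eulerian multigraph $H$: in any partition of the arcs of $H$ into arc-disjoint simple cycles, each cycle contains a given arc at most once, so the partition must contain at least $\mu(H)$ cycles. Applied to $G_T$, this yields $\mu(G_T) \le m$.

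I then aim to build an optimal $k$-DCPP solution $W$ with $G_W$ of the form
\[
G_W = G_T + Z_1 + \cdots + Z_{k-m},
\]
where $Z_1, \ldots, Z_{k-m}$ are simple cycles in $G$ (repetition allowed). Any such $G_W$ is Eulerian (as a sum of Eulerian parts), covers $G$ (since $G_T$ does), and its arc set decomposes into the $m$ cycles of a cycle decomposition of $G_T$ together with $Z_1, \ldots, Z_{k-m}$, so $\nu_a(G_W) \ge k$ and $G_W$ corresponds to a valid $k$-DCPP solution. The multiplicity bound then follows immediately:
\[
\mu(G_W) \le \mu(G_T) + \max_{a} |\{\, i : a \in Z_i \,\}| \le m + (k-m) = k,
\]
since $\mu(G_T) \le m$ and each simple cycle $Z_i$ contains any arc at most once.

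The remaining technical step is to select the $Z_i$ so that the resulting $G_W$ achieves the minimum $k$-DCPP weight. Starting from an arbitrary optimal $k$-DCPP solution $W^*$ and setting $H^* := G_{W^*}$, I would write the signed circulation $H^* - G_T$ as $P - N$ with $P, N$ non-negative multigraphs of disjoint support; then $G_T + P = H^* + N$ is again Eulerian with $\nu_a \ge k$, and by a flow-cycle exchange that combines the optimality of $G_T$ as an Eulerian cover of $G$ with the optimality of $H^*$ as a $k$-DCPP solution, one shows that the transition from $H^*$ to $G_T + P$ is weight-neutral. The non-negative Eulerian multigraph $P$ then decomposes into simple cycles, and any cycle in this decomposition beyond the first $k-m$ is superfluous (its removal preserves $\nu_a \ge k$) and has weight zero by optimality of $H^*$; pruning these yields the desired $Z_1, \ldots, Z_{k-m}$. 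The main obstacle is the weight-neutral exchange: intuitively, $G_T$'s optimality as DCPP forbids $H^*$ from finding strictly cheaper balancing than $G_T$, so the ``negative'' part $N$ must contribute zero to the overall weight change, but making this precise requires a careful decomposition of $H^* - G_T$ into cycles in $G$ and a matching of the positive and negative contributions along these cycles.
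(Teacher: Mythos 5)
Your Case 1 is fine, the observation that an Euler multigraph $H$ has multiplicity at most its maximum number of arc-disjoint cycles is correct, and \emph{if} an optimal $k$-DCPP solution of the form $G_T+Z_1+\cdots+Z_{k-m}$ existed, your multiplicity count would indeed give $\mu(G_W)\le k$. The gap is precisely the step you yourself flag as the main obstacle: the ``weight-neutral exchange'', which amounts to claiming that some optimal $k$-DCPP solution $W^*$ satisfies $G_{W^*}\supseteq G_T$ as multisets (equivalently that the negative part $N$ of $G_{W^*}-G_T$ has weight $0$). This is false in general: the minimum-weight way of balancing $G$ need not be unique, and different optimal balancings can produce Eulerian covers with different numbers of arc-disjoint cycles. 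If another optimal DCPP solution $T'$ already has $k$ arc-disjoint cycles in $G_{T'}$ while $G_T$ has only $m<k$, then (with positive weights) the $k$-DCPP optimum equals the DCPP optimum, whereas every multigraph $G_T+Z_1+\cdots+Z_{k-m}$ is strictly heavier; hence no optimal solution has your form, and no choice of the ``arbitrary optimal $W^*$'' makes your exchange weight-neutral.

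Concretely, take vertices $u,x,v,p,c,d_1,d_2,e_1,e_2$ and arcs $ux,xv$ (weight $1$ each), $vp,pu$ (weight $2$ each), $vc,cu$ (weight $2$ each), and $cd_i,d_iv,ue_i,e_ic$ for $i=1,2$ (weight $1$ each). Only $u$ (out-excess $1$) and $v$ (in-excess $1$) are unbalanced, and the two cheapest $v$--$u$ routes, $v p u$ and $v c u$, both cost $4$, so $G_{T_A}=G+\{vp,pu\}$ and $G_{T_B}=G+\{vc,cu\}$ are both optimal DCPP solutions of weight $22$. Now $G_{T_B}$ decomposes into the five arc-disjoint cycles $vcd_1v$, $vcd_2v$, $cue_1c$, $cue_2c$, $uxvpu$, while $G_{T_A}$ has at most four arc-disjoint cycles (it has $16$ arcs, no $2$-cycles, and at most two arc-disjoint triangles since all triangles use $vc$ or $cu$, each of multiplicity $1$). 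Take $T=T_A$ and $k=5$: the hypothesis $\mu(G_T)=2\le k-1$ holds and $m=4<k$; the $5$-DCPP optimum is $22$ (the five cycles of $G_{T_B}$ are already a solution), but every $G_{T_A}+Z_1$ weighs at least $22+4=26$, since the cheapest cycle of $G$ has weight $4$. So your Case 2 can never reach an optimal solution. The paper's proof avoids exactly this trap: it never insists on keeping $G_T$ inside the final solution. It decomposes the \emph{signed} difference $G_W-G_T$ (for an optimal $W$ with $\mu(G_W)>k$) into cycles that mix added and removed arcs, uses optimality of $T$ to show that applying them in order of weight never decreases the weight, stops at the first intermediate Euler cover $F_j$ with $\mu(F_j)=k$, and there extracts $k$ arc-disjoint cycles via Lemma~\ref{lem:Edeg} and splits as in Lemma~\ref{lem:kdisjointcycles1}. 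Any repair of your argument would have to allow deleting arcs of $G_T$ along the exchange, which essentially reproduces the paper's proof.
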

\begin{proof}
Let $T$ be  an optimal solution of DCPP on $G$ and let $\mu(G_T)\le k-1$. Suppose that there is an optimal solution $W$ of the $k$-DCPP on $G$ such that  $\mu(G_W)>k.$

Let $\rho(xy)=\mu_{G_W}(xy)-\mu_{G_T}(xy)$ for each arc $xy$ of $G$.
Consider a directed multigraph $H'$ with the same vertex set as $G$ and in which $xy$ is an arc of multiplicity $|\rho(xy)|$ if it is an arc in $G$ and $\rho(xy)\neq 0$. We say that an arc $xy$ of $H'$ is  {\em positive} ({\em negative}) if $\rho(xy)>0$ ($\rho(xy)<0$). Now reverse every negative arc of $H'$ (i.e., replace every negative arc $uv$ by the negative arc $vu$) keeping the weight of the arcs the same. We denote the resulting directed multigraph by $H.$

For a digraph $D$ and its vertex $x$, let $N^+_D(x)$ and  $N^-_D(x)$ denote the sets of out-neighbors and in-neighbors of $x$, respectively.
Since $G_T$ and $G_W$ are both Euler directed multigraphs, we have that
$$\sum_{y\in N_{H'}^+(x)}\rho(xy) = \sum_{z\in N_{H'}^-(x)}\rho(zx) \mbox{ implying } \sum_{u\in N_{H}^+(x)}\mu(xu) = \sum_{v\in N_{H}^-(x)}\mu(vx)$$
for each vertex $x$ in $G$. So, every vertex in $H$ has the same in-degree as out-degree.
Thus, the arcs of $H$ can be decomposed into a collection ${\cal C}=\{C_1,\ldots ,C_t\}$ of cycles.  We define the weight $\omega(C_i)$ of a cycle $C_i$ of $\cal C$ as the sum of the weights of its positive arcs minus the sum of the weights of its negative arcs, and assume that $\omega(C_1)\le \cdots \le \omega(C_t)$.

Set $F_0=G_T$ and for $i\in [t]$,  construct $F_{i}$ from $F_{i-1}$ as follows: for each arc $xy$ of $C_{i}$, if $xy$ is a positive arc in $H$ add a copy of $xy$ to $F_{i-1}$ and if  $xy$ is a negative arc in $H$ remove a copy of $yx$ from $F_{i-1}$. Since for each arc $uv$ of $G$, $\mu_{G_T}(uv)\ge 1$ and $\mu_{G_W}(uv)\ge 1$, we have $\mu_{F_{i}}(uv)\ge 1$.
Each vertex of $F_{i}$ is balanced, so
$F_{i}$ is a solution of DCPP on $G$. Since $T$ is optimal, $\omega(F_0)\le \omega(F_1)=\omega(F_0)+\omega(C_1)$ and so $\omega(C_1)\ge 0$.
Due to the ordering of cycles of $\cal C$ according to their weights, $\omega(C_i)\ge 0$ for $i\in [t]$.
Thus, $\omega(F_{i})\ge \omega(F_{i-1})$ for $i\in [t]$.

Since  $\mu(F_0)\le k-1$ and $\mu(F_t)>k$, there is an index $j$ such that $\mu(F_j)=k$. Then the out-degree of some vertex of $F_j$ is at least $k$ and so by Lemma \ref{lem:Edeg}, $F_j$ has $k$ arc-disjoint cycles. Similarly to Lemma \ref{lem:kdisjointcycles1}, it is not hard to show that there is a solution $U$ of $k$-DCPP on $G$ of weight $\omega(F_j)$. Since $W$ is optimal and
$\omega(F_j)\le \omega(F_t) = \omega(G_W)$, $U$ is also optimal and we are done.\qed
\end{proof}

\section{Proof of Theorem \ref{thm:dpa}}\label{sec:dpa}

Theorem \ref{thm:dpa} is proved by providing a dynamic programming (DP) algorithm of required complexity. We first make an observation to simplify the DP algorithm.

\begin{lemma}
\label{lem:dpjustcycles}
Let $G=(V,A)$ and $k$ define an instance of $k[b,c]$-DWCP. 
The instance is positive and and the weight of an optimal solution is $\weight$ if and only if there are (not necessarily connected) non-empty directed multigraphs $G_1, \ldots, G_k$  with the following properties:
\begin{itemize}
\item All multigraphs $G_1, \ldots, G_k$ use only arcs of $G$ (each, possibly, multiple number of times);
\item $G_1$ is a balanced multigraph;
\item For $2 \leq i \leq k$, $G_i$ is a balanced digraph (with no parallel arcs);
\item Each arc $a \in A$ occurs between $b$ and $c$ times in the multigraph\footnote{Here, as in the proof, the union of multigraphs means that the multiplicity of an arc in the union equals the sum of multiplicities of this arc in the multigraphs of the union.}  $G_1\cup \dots \cup G_k$, and the total weight of this multigraph is $\weight$.
\end{itemize}
\end{lemma}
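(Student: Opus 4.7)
The plan is to establish the equivalence at the level of optima: (i) any walk solution $(T_1,\ldots,T_k)$ of weight $w$ yields a valid configuration $(G_1,\ldots,G_k)$ of weight $w$, and (ii) any valid configuration of weight $w$ yields a walk solution of weight at most $w$. Combining the two inequalities shows that the walk and configuration optima coincide, giving the claimed equivalence.

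For (i), let $M_i$ be the arc-multigraph of $T_i$, which is non-empty, connected, and balanced. To enforce simplicity of $M_i$ for $i \ge 2$ I iterate the following: pick an arc $uv$ with $\mu_{M_i}(uv) \ge 2$, find a cycle $C$ in $M_i$ using one copy of $uv$ (such a cycle exists because $M_i$ is Eulerian, hence strongly connected, so there is a $v$-to-$u$ simple path in $M_i$; this path does not use the arc $uv$ and closes with $uv$ into the desired cycle), and transfer $C$'s arcs from $M_i$ to $M_1$. This preserves balance in both multigraphs, leaves $M_i$ non-empty (at least one copy of $uv$ remains), and strictly reduces $\sum_{a}\max(\mu_{M_i}(a)-1,0)$, so the procedure halts with $M_i$ simple. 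Setting $G_i := M_i$ then satisfies all the stated conditions with unchanged total weight.

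For (ii), let $M := G_1 \cup \ldots \cup G_k$; it is balanced, each arc of $G$ occurs in $M$ between $b$ and $c$ times, and $M$ has weight $w$. Since each non-empty balanced $G_i$ contains a cycle $C^{(i)}$, and different $G_i$'s contribute disjoint multigraph copies to $M$, the $C^{(i)}$'s are pairwise arc-disjoint in $M$. If $b \ge 1$ then $M$ contains all arcs of the connected graph $G$ and is itself connected; set $R := M \setminus \bigcup_i C^{(i)}$. Each connected component $U$ of $R$ must share a vertex with some $C^{(i)}$ (otherwise $V(U)$ would be a component of $M$ disjoint from the non-empty vertex set of $\bigcup_i C^{(i)}$, contradicting connectedness), so I merge $U$ into that $C^{(i)}$. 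Every augmented $C^{(i)}$ is then connected and balanced, hence Eulerian, and delivers a non-empty closed walk; the $k$ walks cover $M$ exactly and have total weight $w$. If $b = 0$, the cycles $C^{(1)},\ldots,C^{(k)}$ themselves already form $k$ non-empty closed walks of total weight at most $w$ with each arc used at most $1 \le c$ times (the case $c = 0$ is trivially infeasible in both formulations).

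The main delicate point is direction (ii): the $G_i$'s need not be connected, and a balanced multigraph admits a range of decompositions into closed walks, so one has to actually assemble exactly $k$ non-empty walks from a configuration. The recurring technical workhorse is the well-known fact that a connected balanced directed multigraph is Eulerian, which supplies the cycle through any multi-arc in direction (i) as well as the Eulerianity of the augmented cycles and the vertex-sharing of the leftover $R$ with $\bigcup_i C^{(i)}$ in direction (ii).
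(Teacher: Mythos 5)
Your proof is correct and takes essentially the same route as the paper's: one direction shunts excess multiplicity into $G_1$ so that $G_2,\ldots,G_k$ become simple balanced digraphs, and the converse uses connectivity of the union when $b>0$ to attach the leftover balanced pieces to non-empty Eulerian seeds (you seed with cycles, the paper with chosen components of the $G_i$), with $b=0$ handled by cycles alone; your framing via coincidence of optima matches the lemma's intended reading. Two harmless imprecisions worth fixing: after the first peeling step $M_i$ may be disconnected, so the cycle through the multi-arc should be found inside the (still Eulerian) component containing it, and in the $b=0$ case distinct cycles $C^{(i)}$ can reuse the same arc of $G$, so the correct bound is that the total usage of an arc $a$ is at most $\mu_M(a)\le c$, which follows from the multiset-disjointness you already noted rather than from a per-arc bound of $1$.
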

\begin{proof}
On the one hand, let $W_1, \ldots, W_k$ be a solution to the $k[b,c]$-DWCP instance, where each $W_i$ is a closed directed walk. For each $i\in [k]$, let $Q_i$ be the directed multigraph whose vertices are the vertices visited by $W_i$ and which contains an arc $uv$ of multiplicity $\mu$ if $uv$ is traversed exactly $\mu$ times by $W_i$.
For each $i \geq 2$, if $Q_i$ has parallel arcs, let $G_i$ be a cycle in $Q_i$ and let $Q'_i=Q_i\setminus A(G_i)$ and, otherwise (i.e., $Q_i$ has no parallel arcs), let $G_i=Q_i$ and let $Q'_i$ be empty.
Now let $G_1=Q_1\cup Q_2'\cup \dots \cup Q_k'$. Observe that all properties of the lemma are satisfied.

On the other hand, consider directed multigraphs $G_1, \ldots, G_k$ satisfying the properties of the lemma. If all multigraphs $G_i$ are connected, then we are done.
If $b=0$, then we may replace each graph $G_i$ with a cycle $C_i$ contained in $G_i$, and produce a solution to $k[b,c]$-DWCP that consists of $k$ (not necessarily pairwise arc-disjoint) cycles.

Finally, if not all multigraphs are connected and $b>0$, we proceed as follows. First, select for each multigraph $G_i$, $i>1$ an arbitrary connected component $H_i$, and move all other components of $G_i$ to $G_1$, increasing arc multiplicity as appropriate.
Next, as long as $G_1$ remains unconnected, let $H$ be an arbitrary connected component of $G_1$. As $b>0$ and $G$ is connected, some component $H_i$, $i>1$ must intersect a vertex of $H$; we may move $H$ to the multigraph $G_i$ and maintain that $G_i$ is connected.
Repeat this until $G_1$ (and hence each multigraph $G_i$) is connected. Note that this does not change the arc multiplicity or the weight of the solution. Now every multigraph $G_i$ for $i \in [k]$ is balanced and connected, i.e., Euler, and we can find an Euler tour $W_i$ for each graph $G_i$, which forms the solution to the $k[b,c]$-DWCP instance.\qed
\end{proof}

Let $\nu =~(v_{1},v_{2},\dots,v_{n})$ be a vertex ordering of a digraph $G$ of cutwidth at most $p$.
For each $i \in \{0, 1, \dots, n\}$, let $E_i$ be the set of arcs of the form $v_jv_h$ or $v_hv_j$, where $j \le i$ and $h > i$. Note that in particular $E_0 = \emptyset$ and $E_n = \emptyset$. As $\nu$ has cutwidth at most $p$, $|E_i|\le p$ for each $i$.
We refer to $E_0, E_1, \dots, E_n$ as the \emph{arc bags} of $\nu$.
For each $i \in \{0, 1, \dots, n\}$, let $\gamma(i) = \bigcup_{0\leq j\leq i} E_{j}$.
For a vertex $ v \in V$, let $A^+(v) = \{vu \in A:\ u \in V\}$ and $A^-(v) = \{uv \in A:\ u\in V\}$.

We now give an intuitive description of the DP algorithm before giving technical details.
Our DP algorithm will process each arc bag of $\nu$ in turn, from $E_{0}$ to $E_{n}$.
For each arc bag $E_{i}$, we store the weights of a range of partial solutions. A partial solution consists of a multiset $A_1$ and sets $A_2, \ldots, A_k$ of arcs in $\gamma(i)$. Each $A_j$ is to be thought of the (multi)set of arcs in $G_j$ (defined in Lemma \ref{lem:dpjustcycles}) taken from  $\gamma(i)$.
A function $\arcfunc$ is used to represent how many times each arc in the bag $E_{i}$ is used by each (multi)set $A_j$ in the solution. Finally, a set $S$ tracks which (multi)sets are non-empty. This is to ensure we don't produce a solution which uses less than $k$ non-empty walks.
For each arc bag $E_{i}$, and every choice of $\arcfunc, S$ respecting the conditions of Lemma~\ref{lem:dpjustcycles},
we will calculate the minimum weight of a partial solution corresponding to these choices.

Let us make these notions more precise.
Let $E_{i}$ be an arc bag in $\nu$, and let $\arcfunc$ be a function $E_{i} \times [k] \rightarrow [0,c]$
such that for each $a \in E_i$ we have $\sum_j \arcfunc(a,j) \in [b,c]$ and $\arcfunc(a,j)\leq 1$ for $2 \leq j \leq k$.
Let $S$ be a subset of $[k]$. For a vertex $v$ and multiset $M$ of arcs, let $A^+(v,M)$ be the multiset of arcs from $M$ leaving $v$, and similarly let $A^-(v,M)$ be the multiset of arcs from $M$ entering $v$.
Then we define $\chi(E_{i},\arcfunc,S)$ to be the minimum integer $\weight$
for which there exist arc multisets $A_1, \dots, A_k$ satisfying the following conditions:

\begin{enumerate}
 \item\label{con:ffunction} For every arc $a \in E_{i} $ and every $j \in [k]$, $A_j$ contains exactly $\arcfunc(a,j)$ copies of $a$;
 \item\label{con:edgecovering} For every arc $a \in  \gamma(i)$, the multiset $A_1 \cup \dots \cup A_k$ contains between $b$ and $c$ copies of $a$;
 \item\label{con:vertexbalance} For every $h \le i$  and every $j \in S$, $|A^+(v_h, A_j)| = |A^-(v_h, A_j)|$;
 \item\label{con:sset} For every $j \in [k]$, $A_j \neq \emptyset$ if and only if $j \in S$;
 and
 \item\label{con:wweight} $\sum_{j \in [k]} \sum_{a \in A_j} \omega(a) = \weight$.
\end{enumerate}

Note that $|A^+(v_h,A_j)|$ and  $|A^-(v_h,A_j)|$ are the numbers of arcs in $A_j$
leaving and entering $v_h$, respectively, and that the second sum in
Condition \ref{con:wweight} is taken
over all arcs in multiset $A_j$, i.e., over every copy of an arc in $A_j$.

 If no such integer $\weight$ exists, then we let $\chi(E_{i},\arcfunc,S) = \infty$.

 Observe that if $E_i, \arcfunc, S$ and $\weight$ together with arc multisets $(A_1, \dots, A_k)$ satisfy the above conditions, then $\chi(E_{i},\arcfunc,S) \le \weight$. In such a case we will call $(A_1, \dots , A_k)$ a \emph{witness} for $\chi(E_{i},\arcfunc,S) \le \weight$. Thus, $\chi(E_{i},\arcfunc,S)$ is the minimum $\weight$ such that there exists a witness for  $\chi(E_{i},\arcfunc,S) \le \weight$.
%

 The next lemma shows that we can solve the $k[b,c]$-DCPP by finding the values $\chi(E_{i},\arcfunc,S)$.
  Since $E_n=\emptyset$, the only function $\phi:\
E_n\times [k] \rightarrow [b,c]$ is the empty function.

 \begin{lemma}\label{lem:root}
 Let $\phi:\ E_n\times [k] \rightarrow
[b,c]$ be the empty function.
Then  $\chi(E_n, \phi, [k]) = \infty$ if there is no solution for the $k[b,c]$-DCPP on $G$, and otherwise
  $\chi(E_n, \phi, [k])$ is the minimum total weight of a solution for $k[b,c]$-DCPP.
 \end{lemma}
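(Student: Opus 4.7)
The plan is to unfold the definition of $\chi(E_n, \phi, [k])$ with $E_n = \emptyset$ and observe that a witness for $\chi(E_n, \phi, [k]) \le \weight$ is exactly the same object as the tuple $(G_1, \ldots, G_k)$ produced by Lemma~\ref{lem:dpjustcycles}, after which the conclusion follows by invoking that lemma directly.

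First I would observe that since $E_n = \emptyset$, $\phi$ is the empty function and condition~\ref{con:ffunction} in the definition of $\chi$ is vacuous. Moreover $\gamma(n) = \bigcup_{0 \le j \le n} E_j$ contains every arc of $G$: any arc $v_hv_\ell$ (or $v_\ell v_h$) with $h < \ell$ lies in $E_i$ for $h \le i < \ell$. Hence condition~\ref{con:edgecovering} says exactly that each arc $a \in A$ appears between $b$ and $c$ times in the multiset $A_1 \cup \cdots \cup A_k$, matching the fourth bullet of Lemma~\ref{lem:dpjustcycles}. With $S = [k]$ and the index $h$ ranging over $[n] = [|V|]$, condition~\ref{con:vertexbalance} asserts that \emph{every} vertex is balanced in every $A_j$, which is the balancedness required in the second and third bullets of Lemma~\ref{lem:dpjustcycles}. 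Condition~\ref{con:sset} with $S = [k]$ forces each $A_j$ to be non-empty, matching the non-emptiness clause. Finally, the assumption $\arcfunc(a,j) \le 1$ for $j \ge 2$ (from the restrictions on $\arcfunc$ stated just before the definition of $\chi$) is exactly the requirement that $G_2, \dots, G_k$ have no parallel arcs.

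Thus a witness $(A_1, \dots, A_k)$ for $\chi(E_n, \phi, [k]) \le \weight$ is literally the same object as a tuple $(G_1, \dots, G_k)$ satisfying all properties of Lemma~\ref{lem:dpjustcycles} with total weight $\weight$, where the total weight condition from Lemma~\ref{lem:dpjustcycles} corresponds to condition~\ref{con:wweight}. Applying the equivalence in Lemma~\ref{lem:dpjustcycles}, the existence of such a witness of weight $\weight$ is equivalent to the existence of a solution of $k[b,c]$-DWCP on $G$ of weight $\weight$. Taking the minimum over $\weight$ on both sides gives $\chi(E_n, \phi, [k])$ equal to the optimal weight of a solution (or $\infty$ if no solution exists), completing the proof.

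There is no real obstacle here; the lemma is a routine translation between the two formalisms. The only point that merits explicit mention is that $\gamma(n)$ equals the full arc set $A$, so that the covering condition indeed constrains every arc of $G$, and that the restrictions built into the domain of $\arcfunc$ align with the ``$G_1$ is a multigraph, $G_2,\dots,G_k$ are digraphs'' asymmetry in Lemma~\ref{lem:dpjustcycles}.
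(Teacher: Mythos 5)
Your proposal follows the same route as the paper's own proof: unfold the definition of $\chi(E_n,\phi,[k])$ using $E_n=\emptyset$ and $\gamma(n)=A(G)$, translate Conditions \ref{con:edgecovering}--\ref{con:wweight} into the properties listed in Lemma~\ref{lem:dpjustcycles}, and apply that lemma in both directions, taking minima at the end.

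One claim in your translation is inaccurate, although it is repairable. The restriction $\arcfunc(a,j)\le 1$ for $j\ge 2$ is imposed only on arcs $a\in E_i$, and $E_n=\emptyset$, so at the final bag it is vacuous; nothing in Conditions \ref{con:ffunction}--\ref{con:wweight} prevents a witness $(A_1,\dots,A_k)$ from containing some arc with multiplicity greater than one in an $A_j$ with $j\ge 2$. Hence a witness is not ``literally the same object'' as a tuple $(G_1,\dots,G_k)$ from Lemma~\ref{lem:dpjustcycles}: the third bullet there (no parallel arcs in $G_j$ for $j\ge 2$) need not hold, so in the witness-to-solution direction you cannot invoke the stated equivalence purely as a black box. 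The argument survives because the direction of Lemma~\ref{lem:dpjustcycles} you actually need there (balanced, non-empty multigraphs with the $[b,c]$-coverage property yield a solution of the same weight) never uses simplicity of $G_2,\dots,G_k$; alternatively, one can repair a witness by replacing a non-simple $A_j$ ($j\ge 2$) by a cycle contained in it and moving its remaining arcs into $A_1$, which preserves balance, coverage, non-emptiness and total weight. The solution-to-witness direction is unaffected, since Lemma~\ref{lem:dpjustcycles} supplies simple $G_j$ for $j\ge 2$ and the witness conditions ask for nothing more. The paper's own proof makes the same silent identification (asserting that $A_j$, $j>1$, is ``the arc set of a balanced digraph''), so your write-up matches it in substance; just drop or correct the sentence attributing the no-parallel-arcs property to the domain restriction on $\arcfunc$.
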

 \begin{proof}
We will show that (a) if $\chi(E_n, \phi, [k]) = \weight \neq \infty$, then there
  exists a solution for the $k[b,c]$-DCPP on $G$ with weight $\weight$; and  that (b) if there exists a solution for the $k[b,c]$-DCPP on $G$ with weight $\weight$, then there exists a witness for $\chi(E_n, \phi, [k]) \le \weight$.


  In what follows it will be useful to observe that $\gamma(n) = A(G)$.

  Suppose  that $\chi(E_n, \phi, [k]) = \weight \neq \infty$ and $(A_1, \dots, A_k)$ is a witness for $\chi(E_n, \phi, [k]) \le \weight$.
  By Condition \ref{con:vertexbalance} of $\chi(E_n, \phi, [k])$, every vertex is balanced with respect to each arc (multi)set $A_j$,
  and by Condition \ref{con:sset}, each $A_j$ is non-empty.
  Thus, $A_1$ forms the arc (multi)set of a balanced directed multigraph and $A_j$, $j>1$ the arc set of a balanced digraph,
  and by Condition \ref{con:edgecovering}, every arc in $G$ appears between $b$ and $c$ times in these (multi)sets.
  By Lemma~\ref{lem:dpjustcycles}, the arcs of the multiset $A_1\cup \ldots\cup A_k$ can be partitioned into a solution for the $k[b,c]$-DCPP,
  which by Condition \ref{con:wweight} and minimality of $\weight$ has total weight exactly $\weight$.
  Thus there exists a solution for the $k[b,c]$-DCPP on $G$ with weight $\weight$.

  Now suppose that there exists a solution for the $k[b,c]$-DCPP on $G$ with weight $\weight$;
  by Lemma~\ref{lem:dpjustcycles}, there then exist non-empty balanced directed multigraphs $G_1, \ldots, G_k$ of total weight $\weight$, where every arc appears between $b$ and $c$ times in total,
  and where $G_j$ for $j>1$ has no parallel arcs. Letting $A_j$ be the arc (multi)set of $G_j$ for each $j \in [k]$, we find that
  $(A_1, \dots, A_k)$ is a witness for  $\chi(E_n, \phi, [k]) \le \weight$.
  As $E_n  = \emptyset$, Condition \ref{con:ffunction} of $\chi(E_n, \phi, [k])$ is trivially satisfied.
  Condition \ref{con:edgecovering} is satisfied by the conditions in Lemma~\ref{lem:dpjustcycles}.
   Since every vertex in a balanced directed multigraph is balanced,
Condition \ref{con:vertexbalance} is satisfied.
  As each of the $k$ multigraphs is non-empty, Condition \ref{con:sset} is satisfied.
  Finally, as the multigraphs have total weight $\weight$, Condition \ref{con:wweight} is satisfied.
  Thus $(A_1, \dots, A_k)$ is a witness for  $\chi(E_n, \phi, [k]) \le \weight$, as required.\qed
\end{proof}

Due to the space limit, we place the proof of the next lemma in the Appendix.

 \begin{lemma}\label{lem:parent}
  Consider an arc bag  $E_{i}$, for $i \ge 1$.
   Let $E_i^* = E_{i} \setminus E_{i-1}$.
   For any  $\arcfunc :E_{i} \times [k] \rightarrow [0,c]$ and $S\subseteq [k]$,
   let $Y = \sum_{j\in S}\sum_{a \in E_i^*}\arcfunc(a,j) \cdot \omega(a)$.

   If there exists $a \in E_i$ such that  $\sum_{j \in [k]}\arcfunc(a,j) < b$ or  $\sum_{j \in [k]}\arcfunc(a,j)>c$, then
   $\chi(E_{i},\arcfunc,S) = \infty$.

  Otherwise, the following recursion holds:

  \[\chi(E_{i},\arcfunc,S) =  Y + \min_{\arcfunc',S'}\chi(E_{i-1},\arcfunc',S')\]

   where the minimum is taken over all $\arcfunc' :E_{i-1} \times [k] \rightarrow [0,c]$, and $S'\subseteq [k]$ satisfying the following conditions:
   \begin{itemize}
    \item For all $a \in E_{i}\cap E_{i-1}$ and all $j \in [k]$, $\arcfunc'(a,j) = \arcfunc(a,j)$;
   \item For all $j \in [k]$,
   \begin{eqnarray*}
      \sum_{a \in A^+(v_i) \cap E_{i-1}}\phi'(a,j) + \sum_{a \in A^+(v_i) \cap E_i} \phi(a,j) \\
    =   \sum_{a \in A^-(v_i) \cap E_{i-1}}\phi'(a,j) + \sum_{a \in A^-(v_i)\cap E_i}\phi(a,j).
   \end{eqnarray*}

    \item $S = S' \cup \{j \in [k]: \sum_{a \in E_i^* }\arcfunc(a,j) > 0\}$.
   \end{itemize}
    If there are no $\arcfunc',S'$ satisfying these conditions, then $\chi(E_{i},\arcfunc,S) = \infty$.

 Furthermore, if there exist $\arcfunc',S'$ satisfying the above conditions and we are given a witness $(A_1', \dots, A_k')$ for $\chi(E_{i-1}, \arcfunc', S') \le \rho'$, then we can construct a witness for $\chi(E_{i},\arcfunc,S) \le Y + \rho'$ in polynomial time.
 \end{lemma}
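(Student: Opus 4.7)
The plan is to prove the recursion by establishing a natural correspondence between witnesses for $\chi(E_i,\phi,S) \le \rho$ and pairs consisting of valid $(\phi',S')$ together with witnesses for $\chi(E_{i-1},\phi',S') \le \rho - Y$. The degenerate case is immediate: if some $a \in E_i$ has $\sum_{j}\phi(a,j) \notin [b,c]$, then Conditions~\ref{con:ffunction} and~\ref{con:edgecovering} cannot both hold at $a$, so $\chi(E_i,\phi,S)=\infty$.

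The key structural observation underlying the recursion is that $E_i^* = E_i \setminus E_{i-1}$ consists exactly of the arcs incident to $v_i$ whose other endpoint lies in $\{v_{i+1},\ldots,v_n\}$, while $E_{i-1}\setminus E_i$ consists of the arcs incident to $v_i$ whose other endpoint lies in $\{v_1,\ldots,v_{i-1}\}$; together these two sets comprise \emph{every} arc of $G$ incident to $v_i$. Moreover, since the lifetime of an arc in the bag sequence is a contiguous interval of indices, $\gamma(i)\setminus\gamma(i-1) = E_i^*$. This is what makes step $i$ of the DP entirely local to $v_i$.

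For the direction $\chi(E_i,\phi,S) \le Y + \rho'$: given valid $(\phi',S')$ and a witness $(A_1',\ldots,A_k')$ for $\chi(E_{i-1},\phi',S') \le \rho'$, define $A_j = A_j' \cup \{\phi(a,j)\text{ copies of } a : a \in E_i^*\}$. Condition~\ref{con:ffunction} holds because $\phi'$ and $\phi$ agree on $E_i\cap E_{i-1}$; Condition~\ref{con:edgecovering} is inherited from $(A_j')$ on $\gamma(i-1)$ (only $E_i^*$ arcs are added) and holds on $E_i^*$ by the hypothesis on $\phi$; Condition~\ref{con:vertexbalance} is inherited at $v_h$ for $h<i$ (no added arc touches such $v_h$) and holds at $v_i$ by precisely the displayed balance equation of the recursion, using the structural observation that all arcs at $v_i$ lie in $E_i^*\cup(E_{i-1}\setminus E_i)$; Condition~\ref{con:sset} follows from the choice $S = S'\cup\{j:\sum_{a\in E_i^*}\phi(a,j)>0\}$; and Condition~\ref{con:wweight} gives total weight $\rho' + Y$. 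This explicit recipe also yields the polynomial-time construction asserted in the ``Furthermore'' clause.

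For the direction $\chi(E_i,\phi,S) \ge Y + \min \chi(E_{i-1},\phi',S')$: given a witness $(A_1,\ldots,A_k)$ for $\chi(E_i,\phi,S) \le \rho$, let $A_j'$ be $A_j$ with every copy of every arc in $E_i^*$ removed, define $\phi'(a,j)$ as the multiplicity of $a$ in $A_j'$ for $a \in E_{i-1}$, and set $S' = \{j : A_j' \ne \emptyset\}$. The first bullet holds because stripping does not affect arcs in $E_i\cap E_{i-1}$; the balance equation holds for each $j$ (trivially when $j \notin S$, since then $A_j=\emptyset$, and by balance at $v_i$ in $A_j$ otherwise, again using the structural observation); and the expression for $S$ holds since any $j \in S\setminus S'$ has all of $A_j$ inside $E_i^*$. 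A routine check then shows $(A_1',\ldots,A_k')$ witnesses $\chi(E_{i-1},\phi',S') \le \rho - Y$: balance at $v_h$ for $h\le i-1$ is inherited because stripped arcs all touch $v_i$, coverage on $\gamma(i-1)$ is unchanged, and the removed weight is exactly $Y$. In particular, if no valid $(\phi',S')$ exists then no witness for $\chi(E_i,\phi,S)$ exists, giving $\chi(E_i,\phi,S)=\infty$. The main technical care throughout lies in the balance bookkeeping at $v_i$, which the structural observation on $E_i^*$ and $E_{i-1}\setminus E_i$ is designed to handle cleanly.
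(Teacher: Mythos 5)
Your proof is correct and takes essentially the same route as the paper's: the forward direction builds a witness for $\chi(E_i,\phi,S)\le Y+\rho'$ by adding $\phi(a,j)$ copies of each $a\in E_i^*$ to the given witness, and the reverse direction strips the $E_i^*$ arcs from a witness for $\chi(E_i,\phi,S)$ and reads off $\phi'$ and $S'$, with the balance check localized at $v_i$ via the observation that every arc incident with $v_i$ lies in exactly one of $E_{i-1}$, $E_i$. If anything, your reverse direction is stated a bit more carefully than the paper's wording (you remove exactly the arcs of $E_i^*=\gamma(i)\setminus\gamma(i-1)$ rather than all arcs incident with $v_i$, which is what the coverage and weight bookkeeping actually require).
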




%
%
%
We are now ready to prove Theorem \ref{thm:dpa}. 



 \noindent{\bf Theorem}  \ref{thm:dpa}
 {\em  Let $(G,k)$ be an instance of $k[b,c]$-DWCP and suppose we are given a vertex ordering $\nu =~(v_{1},v_{2},\dots,v_{n})$ of $G$ with cutwidth at most $p$.
  Then $(G,k)$ can be solved in time  $O^*((c2^k)^p4^k)$.}

\begin{proof}
Our DP algorithm calculates all values $\chi(E_{i},\arcfunc,S)$ with $\arcfunc(\cdot,j)\leq 1$ for $j>1$ in a bottom-up manner, that is, we only calculate values $\chi(E_{i}, \cdot, \cdot)$ after all values $\chi(E_{j}, \cdot, \cdot)$ have been calculated for  $0\le j<i$ (we use the recursion of Lemma
\ref{lem:parent}).

 Each arc bag $E_{i}$ of $\nu$ contains at most $p$ arcs. For each arc $a$, there are $c+1$ options for $\arcfunc(a,1)$ and $2$ options for $\arcfunc(a,j)$ for each $j>1$, i.e., $(c+1)2^{k-1} \leq c2^k$ options per arc.
 Thus there are at most $(c2^k)^p$ valid choices for $\arcfunc :E_{i} \times [k] \rightarrow [0,c]$.
 As there are $2^k$ choices for a set $S \subseteq [k]$, the total size of each DP table is $O((c2^k)^p2^k)$.



Since $E_0=\emptyset$, the only function $\phi:\
E_0\times [k] \rightarrow [0,c]$ is the empty function.
It is easy to see that $\chi(E_0, \phi, S) = 0$ if $S = \emptyset$, and $\infty$ otherwise.
To speed up the application of Lemma~\ref{lem:parent} for $E_i$, $1 \leq i \leq n$, we form an intermediate table $T$ from the data for bag $E_{i-1}$.
Call two entries $\chi(E_i, \phi, S)$ and $\chi(E_{i-1}, \phi', S')$ \emph{compatible} when the conditions in Lemma~\ref{lem:parent} are met (i.e., $\chi(E_{i-1}, \phi', S')$ is one of the entries included in the minimisation for $\chi(E_i, \phi, S)$).
Let the \emph{signature} of entry $\chi(E_{i-1}, \phi', S')$ be $(\phi'', d_1,\ldots, d_k, S')$, where $\phi''$ is $\phi'$ restricted to arcs $E_{i-1} \cap E_i$, and where $d_j=\sum_{a \in A^+(v_i) \cap E_{i-1}}\phi'(a,j)-\sum_{a \in A^-(v_i) \cap E_{i-1}} \phi'(a,j)$
is the imbalance at $v_i$ in walk number $j$.
Observe that whether an entry $\chi(E_{i-1}, \phi', S')$ is compatible with the entry $\chi(E_i,\phi,S)$ can be determined from the signature alone.
Thus, for every signature $(\phi'',d_1,\ldots,d_k,S')$ we let $T(\phi'',d_1,\ldots,d_k,S')$ contain the minimum value over all entries $\chi(E_{i-1},\ldots)$ with matching signature; this can be computed in a single loop over the entries $\chi(E_{i-1}, \ldots)$.
Then, for every entry $\chi(E_i,\phi,S)$ of the new table, we look in $T$ through all signatures that would be compatible with $(\phi,S)$ and keep the minimum value (and add $Y$ to it, by Lemma~\ref{lem:parent}). The reason we may have to look at several signatures is the set $S$; for simplicity, we may simply loop over all sets $S' \subseteq S$ such that $S' \cup \{j \in [k]: \phi(a,j)>0, \text{ some } a \in E_i\}=S$.
Note that the size of the intermediate table $T$ is immaterial; the time taken consists of first one loop through $\chi(E_{i-1},\ldots)$, then
$2^k$ queries to $T$ for each entry in $\chi(E_i, \ldots)$. Thus, the entries $\chi(E_i, \ldots)$ can all be computed in total time $O^*((c2^k)^p4^k)$.
 As $E_n = \emptyset$ there is only one function $\phi: E_n \times [k] \rightarrow [b,c]$. By Lemma \ref{lem:root}, $\chi(E_n, \phi, [k])$ is the minimum total weight of a solution for $k[b,c]$-DCPP, and $\infty$ if there is no such solution. Thus to solve $k[b,c]$-DCPP it suffices to check the value of $\chi(E_n, \phi, [k])$.
%

 Thus the algorithm finds the value $\weight$ in time $O^*((c2^k)^p4^k)$.

The algorithm can easily be made constructive using the method of Lemma \ref{lem:parent}. For each arc bag $E_i, \arcfunc: E_i\times [k] \rightarrow [0,c], S \subseteq [k]$, in addition to calculating the value $\chi(E_i, \phi, S) = \rho$, we also calculate a witness for $\chi(E_i, \phi, S) \le \rho$, in the cases where $\rho \neq \infty$. Just as we can calculate the values of all $\chi(E_{i}, \cdot, \cdot)$ given the values of all $\chi(E_{i-1}, \cdot, \cdot)$, we may construct witnesses for all $\chi(E_{i}, \cdot, \cdot)$ given witnesses for all $\chi(E_{i-1}, \cdot, \cdot)$, using an intermediate table $T$ as before. (Note that $(A_1, \dots, A_k)$, where each $A_i = \emptyset$, is a witness for $\chi(E_0, \phi, \emptyset) = 0$, where $\phi$ is the empty function. This gives us the base case in our construction of witnesses.)
Given a witness for $\chi(E_n, \phi, [k])$,
Lemma \ref{lem:root} shows how to construct a solution to $k[b,c]$-DCPP on $G$ from this witness.\qed
\end{proof}

\section{Proofs of Theorems \ref{thm:main} and \ref{thm:main2}}\label{sec:main}


\medskip

\noindent{\bf Theorem} \ref{thm:main2}
{\em The $k$-ADCP-{\sc Euler}  is fixed-parameter tractable.}
\begin{proof}
Let $D$ be an Euler digraph. We may assume that $D$ has no vertex of out-degree at least $k$ as otherwise we are done by Lemma \ref{lem:Edeg}.
By Lemma \ref{lem:ptww}, for $D$ we can either obtain $k$ arc-disjoint cycles or a vertex ordering $\nu$ of cutwidth at most $2g(k)$ for some function $g:\ \mathbb{N} \rightarrow \mathbb{N}$.
Note that $D$ is a positive instance of the $k$-ADCP-{\sc Euler}  if and only if $(D,k)$ has a finite solution for $k[0,1]$-DWCP (as every closed walk contains a cycle).
It remains to observe that the algorithm of Theorem \ref{thm:dpa} for the $k[0,1]$-DWCP is fixed-parameter when the out-degree of every vertex of $D$ is upper-bounded by $k$ and the cutwidth of $\nu$ is bounded by a function of $k$.\qed
\end{proof}

\medskip

\noindent{\bf Theorem} \ref{thm:main}
{\em The $k$-DCPP admits a fixed-parameter algorithm.}

\begin{proof}
Let $G=(V,A)$ be a digraph and let $T$ be an optimal solution of DCPP on $G$. 
If  we get a collection $\cal C$ of $k$ arc-disjoint cycles in $G_T$,
then using $\cal C$, by Lemma \ref{lem:kdisjointcycles1}, we can solve the $k$-DCPP on $G$ in (additional) polynomial time. Otherwise, by lemma \ref{lem:ptww}, we have a vertex ordering of $G_T$ of cutwidth bounded by a function of $k$.
We may assume that every vertex
of $G_T$ is of out-degree at most $k-1$
(otherwise by Lemma \ref{lem:Edeg}, $G_T$ has a collection of $k$ arc-disjoint cycles). Since every vertex of $G_T$ is of out-degree at most $k-1$, the multiplicity of $G_T$ is at most $k-1.$
Now Lemma \ref{lem:multiplicity} implies that there is an optimal solution $W$ for the $k$-DCPP on $G$ such that the multiplicity of $G_W$ is at most $k$.
Thus, we may treat the $k$-DCPP on $G$ as an instance $(G,k)$ of $k[1,k]$-DWCP. It remains to observe that the algorithm of Theorem \ref{thm:dpa} to solve the $k[1,k]$-DWCP on $G$ will be fixed-parameter.\qed
\end{proof}

\section{Discussions}\label{sec:dis}

Our algorithms for solving both $k$-DCPP and $k$-ADCP on Euler digraphs have very large running time bounds, mainly because the bound $f(h^{-1}(k))$ on the size of feedback arc set is very large. Function $f(k)$ obtained in \cite{ReRoSeTh1996} is a multiply iterated exponential, where the number of iterations is also a multiply iterated exponential and, as a result, $h^{-1}(k)$ 
grows very quickly.
So obtaining a significantly smaller upper bound for $f(k)$ on Euler digraphs would significantly reduce $h^{-1}(k)$ as well and is of certain interest in itself.
 In particular, is it true that $f(k)=O(k^{O(1)})$ for Euler digraphs? Note that for planar digraphs, $f(k)=k$ \cite[Corollary 15.3.10]{BaGu2009}
 and  Seymour \cite{Sey1996} proved the same result for a wide family of Euler digraphs.
 It would also be interesting to check whether the $k$-DCPP or $k$-ADCP admits a polynomial-size kernel.

 Cechl\'{a}rov\'{a} and Schlotter \cite{CeSc2010} introduced the following
somewhat related problem in the context of housing markets:  can we delete
at most $k$ arcs in a given digraph such that each strongly connected
component of the resulting digraph is Euler? They asked for the
parameterized complexity of this problem, where $k$ is the parameter.
Crowston {\em et al.} \cite{CrGuJoYe2012} showed that the problem
restricted to tournaments is fixed-parameter tractable, but in general the
complexity still remains an open question. See also the recent paper
\cite{CyMaPiPiSc2014} for other related problems.

\medskip

\medskip

\noindent{\bf Acknowledgement} Research of GG was supported by Royal Society Wolfson Research Merit Award.

\newpage


\vspace{1cm}

{\large \bf Appendix: Proof of Lemma \ref{lem:parent}}

\vspace{0.3cm}

\noindent{\bf Lemma \ref{lem:parent}}
 {\em  Consider an arc bag  $E_{i}$, for $i \ge 1$.
   Let $E_i^* = E_{i} \setminus E_{i-1}$.
   For any  $\arcfunc :E_{i} \times [k] \rightarrow [0,c]$ and $S\subseteq [k]$,
   let $Y = \sum_{j\in S}\sum_{a \in E_i^*}\arcfunc(a,j) \cdot \omega(a)$.

   If there exists $a \in E_i$ such that  $\sum_{j \in [k]}\arcfunc(a,j) < b$ or  $\sum_{j \in [k]}\arcfunc(a,j)>c$, then
   $\chi(E_{i},\arcfunc,S) = \infty$.

  Otherwise, the following recursion holds:

  \[\chi(E_{i},\arcfunc,S) =  Y + \min_{\arcfunc',S'}\chi(E_{i-1},\arcfunc',S')\]

   where the minimum is taken over all $\arcfunc' :E_{i-1} \times [k] \rightarrow [0,c]$, and $S'\subseteq [k]$ satisfying the following conditions:
   \begin{itemize}
    \item For all $a \in E_{i}\cap E_{i-1}$ and all $j \in [k]$, $\arcfunc'(a,j) = \arcfunc(a,j)$;
   \item For all $j \in [k]$,
   \begin{eqnarray*}
      \sum_{a \in A^+(v_i) \cap E_{i-1}}\phi'(a,j) + \sum_{a \in A^+(v_i) \cap E_i} \phi(a,j) \\
    =   \sum_{a \in A^-(v_i) \cap E_{i-1}}\phi'(a,j) + \sum_{a \in A^-(v_i)\cap E_i}\phi(a,j).
   \end{eqnarray*}

    \item $S = S' \cup \{j \in [k]: \sum_{a \in E_i^* }\arcfunc(a,j) > 0\}$.
   \end{itemize}
    If there are no $\arcfunc',S'$ satisfying these conditions, then $\chi(E_{i},\arcfunc,S) = \infty$.

 Furthermore, if there exist $\arcfunc',S'$ satisfying the above conditions and we are given a witness $(A_1', \dots, A_k')$ for $\chi(E_{i-1}, \arcfunc', S') \le \rho'$, then we can construct a witness for $\chi(E_{i},\arcfunc,S) \le Y + \rho'$ in polynomial time.
}

\medskip 

\noindent{\bf Proof}
We will prove the last claim of the lemma first. Suppose we are given a witness $(A_1', \dots, A_k')$ for $\chi(E_{i-1}, \arcfunc', S') \le \rho'$.
For each $j \in [k]$, let $A_j$ be the multiset $A_j'$ together with $\arcfunc(a,j)$ copies of each arc in $E_i^*$.
 We now show that $(A_1, \dots, A_k)$ is a witness for
 $\chi(E_{i},\arcfunc,S) =  Y + \rho'$.

 By construction of $A_j$, definition of $A_j'$ and the fact that $\arcfunc'(a,j) = \arcfunc(a,j)$ for all $a \in E_{i} \cap E_{i-1}$, $j \in [k]$, we have that for all $a \in E_{i}$ and $j \in [k]$, $A_j$ contains exactly $\arcfunc(a,j)$ copies of $a$, satisfying Condition \ref{con:ffunction} of $\chi(E_{i},\arcfunc,S) \le Y + \rho'$.

  By definition of $A_j'$ and the fact that $b \le \sum_{j \in [k]}\arcfunc(a,j) \le c$ for each $a \in E_i^*$, we have that every arc appears at least $b$ times and at most $c$ times in $A_1 \cup \dots \cup A_k$, satisfying Condition \ref{con:edgecovering}.

 Observe that $E_i^*$ consists of all arcs of the form $v_iv_h$ or $v_hv_i$ for $h > i$.
It follows by construction that for any $h < i$ and $j \in [k]$, $A^+(v_h,A_j)=A^+(v_h,A'_j)$ and $A^-(v_h,A_j)=A^-(v_h,A'_j)$.
  Then as $|A^+(v_h, A_j')| = |A^-(v_h, A_j')|$ for all $h < i$, we have that $|A^+(v_h, A_j)| = |A^-(v_h,A_j)|$ for all $h < i$.
 As every arc incident with $v_i$ is in exactly one of $E_{i-1}$ or $E_i$, we have that for all $j \in [k]$,  $|A^+(v_i, A_j)|  = \sum_{a \in A^+(v_i) \cap E_{i-1}}\phi'(a,j) + \sum_{a \in A^+(v_i) \cap E_i} \phi(a,j)$, and similarly $|A^-(v_i, A_j)| = \sum_{a \in A^-(v_i) \cap E_{i-1}}\phi'(a,j) + \sum_{a \in A^-(v_i)\cap E_i}\phi(a,j)$. It follows by the second condition of the lemma that $|A^+(v_i, A_j)| = |A^-(v_i, A_j)|$.
Therefore $|A^+(v_h, A_j)| = |A^-(v_h, A_j)|$ for all $h \le i$, satisfying Condition \ref{con:vertexbalance}.

 By the fact that $S = S' \cup \{j \in [k]: \sum_{a \in E_{i} \setminus E_{i-1} }\arcfunc(a,j) > 0\}$, definition of $(A_1', \dots, A_k')$ and construction of $(A_1, \dots, A_k)$, we have that $S = S' \cup \{j \in [k]: A_j \setminus A_j' \neq \emptyset\} = \{j \in [k]: A_j \neq \emptyset\}$. This satisfies Condition \ref{con:sset}.

 Finally, by construction of $\{A_1, \dots ,A_k\}$ and $\chi(E_{i-1},\arcfunc',S')$, we have that
 $\chi(E_{i},\arcfunc,S)=\sum_{j \in [k]} \sum_{a \in A_j} \omega(a) = Y + \sum_{j \in [k]} \sum_{a \in A_j'} \omega(a) = Y + \chi(E_{i-1},\arcfunc',S')$, satisfying Condition \ref{con:wweight}.

 Thus, we have that $(A_1, \dots, A_k)$ is a witness for $\chi(E_{i},\arcfunc,S) \le  Y + \rho'$.

 We now prove the other claims of the lemma.
  If there exists $a \in E_i$ such that  $\sum_{j \in [k]}\arcfunc(a,j) < b$ or  $\sum_{j \in [k]}\arcfunc(a,j)>c$, then any arc multisets $A_1, \dots, A_k$ that satisfy Condition \ref{con:ffunction} of $\chi(E_{i},\arcfunc,S)$ will falsify Condition \ref{con:edgecovering}, and so  $\chi(E_{i},\arcfunc,S) = \infty$.
  So now assume that $b \le \sum_{j \in [k]}\arcfunc(a,j) \le c$ for every $a \in E_i$.

  Let $\arcfunc' :E_{i-1} \times [k] \rightarrow [b,c]$, $S'\subseteq [k]$ be such that the conditions of the lemma are satisfied and $\chi(E_{i-1},\arcfunc',S')$ is minimised.
  If $\chi(E_{i-1},\arcfunc',S') = \infty$ then trivially $\chi(E_{i},\arcfunc,S) \le  Y + \chi(E_{i-1},\arcfunc',S')$.
  Otherwise, $\chi(E_{i-1},\arcfunc',S') = \rho' \neq \infty$ and so there exists a witness for $\chi(E_{i-1},\arcfunc',S') \le \rho'$
  Then by the argument above, there exists a witness for $\chi(E_{i},\arcfunc,S) \le  Y + \chi(E_{i-1},\arcfunc',S')$.
  In either case $\chi(E_{i},\arcfunc,S) \le  Y + \chi(E_{i-1},\arcfunc',S')$.

  It remains to show that if $\chi(E_{i},\arcfunc,S) \neq \infty$, then there exist $\arcfunc',S'$ such that $\chi(E_{i},\arcfunc,S) =  Y + \chi(E_{i-1}\arcfunc',S')$.

  Suppose that  $\chi(E_{i},\arcfunc,S) = \weight \neq \infty$.
  Let $(A_1, \dots, A_k)$ be a witness for  $\chi(E_{i},\arcfunc,S) = \weight$.
  Then for each $j\in [k]$, let $A'_j$ be the multiset of arcs
from $A_j$ not incident to $v_i$ and let $A^*_j$ be the multiset of arcs  from
$A_j$ incident to $v_i$. For a multiset $M$ of arcs from $G$, let
$\omega(M)=\sum_{a\in M}\omega(a)$, where each arc $a$ is taken in the sum
as many times as it has copies in $M.$ Observe that $Y=\sum_{j\in
[k]}\omega(A^*_j)$.
  Let $Z = \sum_{j \in [k]}\omega(A_j')$; then $\chi(E_{i},\arcfunc,S) =  Y +Z$.

  Let $\arcfunc' :E_{i-1} \times [k] \rightarrow [b,c]$ be the function such that $\arcfunc'(a,j)$ is the number of copies of $a$ in $A_j'$, for each $a \in E_{i-1}, j \in [k]$.
  Finally let $S' = \{j \in [k]: A_j' \neq \emptyset\}$.

  As $\gamma(i) \setminus \gamma(i-1)$ contains no arcs incident to $v_h$ for any $h < i$, we have that for any $h < i$, $|A(v) \cap A_j'| = |A(v) \cap A_j|$ for each $j \in [k]$. Therefore $(A_1', \dots, A_k')$ satisfies Conditon \ref{con:vertexbalance} of a witness for  $\chi(E_{i-1},\arcfunc',S') = Z$. It is easy to see that $(A_1', \dots, A_k')$ satisfies the other conditions for a witness for $\chi(E_{i-1},\arcfunc',S') = Z$, from which it follows that $\chi(E_{i},\arcfunc,S) =  Y +\chi(E_{i-1},\arcfunc',S')$.\qed
%

\end{document}